 \theoremstyle{plain}
 \newtheorem{theorem}{Theorem}[section]
 \theoremstyle{definition}
 \newtheorem{example}[theorem]{Example}
 \theoremstyle{remark}
 \numberwithin{equation}{section}
 \theoremstyle{remark}
\def \tuple#1{\langle #1 \rangle}
\newcommand{\bN}{\mathbb{N}}
\newcommand{\cS}{\mathcal{S}}
\newcommand{\cG}{\mathcal{G}}
\newcommand{\cE}{\mathcal{E}}
\newcommand{\cC}{\mathcal{C}}
\newcommand{\cD}{\mathcal{D}}
\newcommand{\cP}{\mathcal{P}}
\newcommand{\bT}{\mathbb{T}}
\newcommand{\bZ}{\mathbb{Z}}
\newcommand{\bP}{\mathbb{P}}
\newcommand{\id}{\mathrm{id}}
\newcommand{\ud}{\triangleq}
\newcommand{\Lra}{\Leftrightarrow}
\newcommand{\Ra}{\Rightarrow}
\newcommand{\La}{\Leftarrow}
\newcommand{\ra}{\rightarrow}
\newcommand{\even}{{\textit{even}}}
\newcommand{\odd}{{\textit{odd}}}
\DeclareMathOperator{\Sign}{Sign}
\DeclareMathOperator{\sq}{sq}
\newcommand{\cd}[1]{\text{\lstinline$#1$}}
\DeclareMathOperator{\prt}{prt}
\DeclareMathOperator{\succe}{succ}
\DeclareMathOperator{\parity}{parity}
\newcommand{\wpd}{{\wp^{\downarrow}}}
\newcommand{\etav}{{\eta^{\vee}}}
\newcommand{\alphasing}{{\alpha^{\scriptscriptstyle\{\!\cdot\!\}}}}
\def\sing#1{{#1^{\scriptscriptstyle\{\!\cdot\!\}}}}
\newcommand{\srhd}{{\scriptscriptstyle \rhd}}
\newcommand{\comment}[1]{}
\title{A Constructive Framework for Galois Connections}
\author{Francesco Ranzato\\
{\normalsize Dipartimento di Matematica, University of Padova, Italy}}
\date{}
\begin{document}
\maketitle

\begin{abstract}
Abstract interpretation-based static analyses rely on abstract domains of program properties, such as 
intervals or congruences for integer variables.  
Galois connections (GCs) between posets provide the most widespread and useful formal tool for mathematically 
specifying abstract domains. 
Recently, Darais and Van Horn [2016] put forward a notion of constructive Galois connection for unordered sets (rather than posets), 
which allows to define abstract domains in a so-called mechanized and calculational proof style and therefore enables the
use of proof assistants like Coq and Agda for automatically extracting verified algorithms of static analysis. 
We show here that constructive GCs are isomorphic, in a precise and comprehensive meaning including sound abstract functions, 
to so-called partitioning GCs~---~an already known class of GCs which allows to cast standard set partitions as an abstract domain.
Darais and Van Horn [2016] also provide a notion of constructive GC for posets, which we prove to be
isomorphic to plain GCs and therefore lose their constructive attribute. 
Drawing on these findings, 
we put forward and advocate the use of purely partitioning GCs, a novel  
class of constructive abstract domains for a mechanized approach to abstract interpretation. We show that this class of abstract domains
allows us to represent a set partition with more flexibility while retaining a constructive approach to Galois connections.
\end{abstract}

\section{Introduction}
Abstract interpretation \cite{CC77,CC79} is probably the most used and successful technique for defining
approximations of program  semantics (or, more in general, of computing systems) to be used for designing provably 
sound static program analyzers. 
Abstract domains play a crucial role in any abstract interpretation, since they encode, both logically for reasoning purposes
and practically for implementations,  
which program properties are computed by a static analysis. Since its beginning~\cite{CC77}, 
one major insight of abstract interpretation is given by the use of Galois connections (GCs) for defining abstract domains. A specification
of an abstract domain $D$ 
through a Galois connection  prescribes that: (1)~both concrete and abstract domains, $C$ and $D$, are
partially ordered, and typically they give rise to complete lattices; (2)~definitions of abstraction $\alpha:C\ra D$
and concretization $\gamma:D\ra C$ maps to relate concrete and abstract values; (3)~$\alpha$ and $\gamma$ 
give rise to an adjunction relation: $\alpha(c)\leq_D d \Lra c \leq_C \gamma(d)$. 
GCs carry both advantages and  drawbacks. One major benefit is the so-called calculational style for defining
abstract operations \cite{cou99,mj08}: if $f:C\ra C$ is any concrete operation involved by some semantic definition (e.g., 
integer addition or multiplication) then a corresponding
correct approximation on $A$ is defined by 
$\alpha\circ f\circ \gamma :A\ra A$, which turns out to be the best possible approximation 
of $f$ on the abstract domain $A$ and, as envisioned by Cousot~\cite{cou99},
allows to systematically derive abstract operations in a correct-by-design manner. On the negative side, GCs have two main weaknesses. 
First, GCs formalize an ideal situation
where each concrete property in $C$ has a unique best abstract approximation in $D$.
Some very useful and largely used abstract domains cannot be defined by a GC, being convex polyhedra a
prominent example of abstract domain where no abstraction map can be defined~\cite{ch78}. 
This problem motivated weaker abstract interpretation frameworks which
only need concretization maps~\cite{cc92}. Secondly, it turns out that abstraction maps of GCs cannot be mechanized~\cite{mon98,pic05}, 
meaning that one cannot use automatic formal proof systems like Coq in order to extract certified algorithms of abstract interpretation,
e.g., based on best correct approximations $\alpha\circ f\circ \gamma$. In other terms, the calculational approach of abstract
interpretation cannot be
automatized. Notably, Verasco~\cite{jou,verasco} (and its precursor described in \cite{blazy13}) 
is a static analyzer for C  which has been formally
designed and verified using the Coq proof assistant, and is based on abstract interpretation using only concretization maps. 
This latter motivation was one starting point of Darais and Van Horn~\cite{dv16} for investigating constructive versions
Galois connections, together with the observation that many useful abstract domains, even if defined by an abstraction
map, still would permit a mechanization of their soundness proofs. Also, Darais and Van Horn's approach \cite{dv16} 
generalizes `Galculator' \cite{so08}, which is a proof assistant based on a given small algebra of Galois connections. 

Constructive Galois connections (CGCs) \cite{dv16} stem from the observation that for many commonly used abstract domains: 
(1) the concrete domain is a powerset (or collecting) domain $\wp(A)$ of an unordered carrier domain $A$; 
(2) the abstraction map $\alpha:\wp(A)\ra D$ is
actually defined as collecting lifting of a basic abstraction function $\eta$ which is defined just on the carrier domain $A$ and takes
values belonging to an unordered abstract domain $B$, 
that is, $\eta:A \ra B$; (3) the concretization map $\mu:B\ra \wp(A)$ provides meaning to basic abstract values ranging in $B$; 
(4) the $\alpha$/$\gamma$ adjunction relation can be equivalently reformulated in terms of the following correspondence between $\eta$ and $\mu$: 
$$
x\in \mu(y) \Lra \eta(x)=y \eqno(\text{CGC-Corr}) 
$$
Moreover, CGCs allow to give 
a soundness condition for pairs of concrete and abstract functions which are defined on carrier concrete and abstract domains $A$ and $B$. 
As a simple example taken from \cite[Section~2]{dv16}, the standard toy parity abstraction for 
integer variables can be defined as a CGC as follows. The carrier concrete domain is $\bZ$, the unordered parity 
domain is
$\bP=\{\even,\odd\}$, abstraction $\parity:\bZ \ra \bP$ and concretization $\mu:\bP \ra \wp(\bZ)$ mappings are straightforwardly defined
and satisfy (CGC-Corr): $z\in \mu(a) \Lra \parity (z) = a$. Also, a successor concrete operation 
$\succe:\bZ \ra \bZ$ is approximated by a sound abstract successor $\succe_\sharp:\bP \ra \bP$ such that 
$\succe_\sharp(\even)=\odd$ and $\succe_\sharp(\odd)=\even$. 

Darais and Van Horn \cite{dv16} also put forward a more general notion of constructive Galois connection for posets (CGP), 
where the carrier concrete domain $A$ and the abstract domain $B$ are posets (rather than unordered sets), and where 
the condition (CGC-Corr) is weakened to: 
$$
x\in \mu(y) \Lra \eta(x) \leq_B y \eqno(\text{CGP-Corr})
$$
This allows them to provide a constructive definition for ordered abstract domains like the sign abstraction of integer variables 
$\Sign = \{\varnothing,\: <\! 0, \: = \!0,\: >\! 0 ,\: \leq\!  0 ,\: 
\neq\! 0 ,\: \geq\! 0 ,\: \bZ \}$ which encodes approximation relations between its abstract values, e.g., the abstract value 
$\geq \!0$ approximates $>\! 0$.

\medskip
\noindent
\textbf{Contributions.}
Our initial observation was that CGCs always encode a partition of the concrete carrier set $A$. As a simple example, for the above 
parity domain $\bP$, the induced partition of $\bZ$ obviously consists of two blocks: $\{z\in \bZ~|~ z ~\text{even}\}$ and 
$\{z\in \bZ~|~ z ~\text{odd}\}$. Furthermore, we also noticed that if an abstract domain $D$ of a collecting domain
$\wp(A)$ does not induce an
underlying partition of $A$ and $D$ is defined through a standard Galois connection $\cG$ then $\cG$ cannot be 
constructively and equivalently formulated by a CGC. Indeed, abstract domains which encode a partition of a given carrier set
have been previously studied and formalized as so-called partitioning Galois connections (PGCs). Intuitively,
a Galois connection defining a domain $D$ which abstracts a concrete collecting domain $\wp(A)$ is called 
partitioning \cite{CC94,rt05} when $D$ represents a partition $\cP$ of the set $A$ together with all the possible unions of blocks
in $\cP$. Our first contribution shows that CGCs are isomorphic to PGCs. This isomorphism is constructive, meaning that 
we define two invertible transforms which map a CGC to an equivalent PGC and vice versa. Moreover, this isomorphism includes 
soundness of
abstract operations, meaning that we also define invertible transforms of pairs of concrete/abstract operations which preserve their
soundness. Secondly, we also investigated Darais and Van Horn's CGPs, in order to characterize them as a suitable class of Galois connections.  
We show that CGPs actually amount to plain GCs of a powerdomain, and therefore CGPs are not able to isolate a specific
class of GCs. This is a negative finding: the generalization from CGCs to CGPs loses the constructive attribute of CGCs. 
Drawing on these results, our third and most significant contribution is the definition of a novel class of
constructive Galois connections, called \emph{purely constructive} GCs (PCGCs), which is more flexible than CGCs while retaining
a constructive approach to Galois connections. The basic idea underlying PCGCs is as follows. 
CGCs essentially represent a partition of the carrier concrete domain
$A$ through an abstract domain $B$. We showed that this partition representation in $B$ implicitly 
brings all the possible unions of its blocks. 
We generalize this approach by allowing to select which unions of blocks to consider in the abstract domain $B$. 
Hence, $B$ may be defined as a partition $\cP$ of $A$ together with an explicit choice of unions of blocks of $\cP$,
where this selection may range from none to all. As an example, a sign abstraction like $\Sign^- \ud \Sign \smallsetminus 
\{\neq\! 0\}$ cannot be formalized as a CGC, although $\Sign^-$ still  represents a partition of $\bZ$ since $\Sign^-$ just lacks
the union of blocks corresponding to the abstract value $\neq\! 0$, that is, the union of $<\! 0$ and $>\! 0$. 
In our framework, $\Sign^-$ can be exactly defined as a PCGC. Moreover, PCGCs come together with a 
definition of sound abstract operations, which also accomodates the standard notions of completeness commonly 
used in abstract interpretation. This paper therefore advocates the use of PPGCs as a
suitable class of constructive abstract domains for a mechanized approach to abstract interpretation. 

\section{Background}
\paragraph*{Notation.}

Let $f:A\ra B$, $g:A\ra \wp(B)$ and $h:\wp(A)\ra B$, $k:A\ra C$, where $A$ and $B$ are sets and 
$C$ is a complete lattice with lub $\vee$. We then use the following definitions:
\[
\begin{array}{rll}
\text{powerset (or collecting) lifting:}& \quad f^\diamond:\wp(A)\ra \wp(B) &f^\diamond (X) \ud \{f(x)~|~x\in X\}\\
\text{singleton powerset lifting:}& \quad f^\srhd: A \ra \wp(B) & f^\srhd (a) \ud \{f(b)\}\\
\text{domain powerset lifting:}& \quad g^*: \wp(A) \ra \wp(B) &g^* (X) \ud \cup_{x\in X} g(x)\\
\text{singleton lowering:}&\quad \sing{h}:A\ra B &\sing{h}(a) \ud h(\{a\}) \\
\text{lub domain powerset lifting:}&\quad k^\vee :\wp(A)\ra C &k^\vee (X) \ud \vee_{a\in X} k(a) 
\end{array}
\]
\noindent
Somewhere we use $f(X)$ as an alternative notation for $f^\diamond (X)$.  
If $A$ is a poset and $X\subseteq A$ then ${\downarrow \!X} \ud \{y\in A~|\exists x\in X.y\leq x\}$, and, in turn,
$\wpd(A) \ud \{X\subseteq A~|~X = \:\downarrow \!X\}$ denotes the downward powerdomain of $A$, 
which ordered by subset inclusion, is a complete lattice. We use $\downarrow \!a$ as a shorthand for $\downarrow \!\{a\}$.
Recall that any set $A$ can be viewed as poset w.r.t.\ the so-called discrete partial order $\leq$: 
for all $x,y\in A$, $x\leq y$ iff $x=y$. 
Let us also recall that $\cP\subseteq \wp(A)$ is a partition of $A$ when: $B\in \cP \Ra B\neq \varnothing$; if $B_1,B_2\in \cP$ and $B_1\neq B_2$ then
$B_1\cap B_2 = \varnothing$; $\cup_{B\in \cP} B = A$. 

\paragraph*{Galois connections.}
Recall that $\cG=\tuple{\alpha,C,D,\gamma}$ is a Galois connection (GC) 
when $C$ and $D$ are posets, $\alpha:C\ra D$, $\gamma:D\ra C$ and
$\alpha(c) \leq_D d\:\Lra
c\leq_C \gamma(d)$. By following a standard terminology in abstract interpretation, $C$ and $D$ are 
called  concrete and abstract domains, while $\alpha$ and $\gamma$ are called abstraction and concretization maps. 
$\cG$ is a disjunctive GC when $\gamma$ is additive (intuitively meaning that it abstractly 
represents logical disjunctions with no loss of precision). 
$\cG$ is a Galois insertion (GI) when $\alpha$ is surjective (or, equivalently, $\gamma$ is injective).

Let us recall some standard definitions and terminology of abstract interpretation \cite{CC77,CC79}.
Let $f:C\ra C$ and $f_\sharp:D\ra D$ be, respectively, concrete and abstract functions. 
The pair $\tuple{f,f_\sharp}_\cG$ is sound (w.r.t.\ $\cG$) when $\alpha\circ f\circ \gamma \sqsubseteq f_\sharp$, 
or, equivalently, $\alpha \circ f \sqsubseteq f_\sharp \circ \alpha$. Also, the pair $\tuple{f,f_\sharp}_\cG$ is optimal when $\alpha\circ f\circ \gamma = f_\sharp$,
backward complete when $\alpha \circ f = f_\sharp \circ \alpha$, forward complete when 
$f \circ \gamma = \gamma \circ f_\sharp$, precise when $f = \gamma\circ f_\sharp\circ \alpha$. 
The abstract function 
$f_\cG\ud \alpha\circ f\circ \gamma$ is called the best correct approximation (BCA) of $f$ induced by $\cG$.

Let $\cG_1 = \tuple{\alpha_1,C,D_1,\gamma_1}$ and $\cG_2=\tuple{\alpha_2,C,D_2,\gamma_2}$ be two GCs with a common
concrete domain $C$. $\cG_1$ is more precise than $\cG_2$, 
denoted by $\cG_1 \sqsubseteq \cG_2$, when $\gamma_2(\alpha_2(C)) \subseteq \gamma_1(\alpha_1(C))$. In turn, 
$\cG_1$ and $\cG_2$ 
are isomorphic when $\cG_1 \sqsubseteq \cG_2$ and $\cG_2 \sqsubseteq \cG_1$, i.e., when 
$\gamma_1(\alpha_1(C)) = \gamma_2(\alpha_2(C))$ holds. The intuition is that $\cG_1$ and $\cG_2$ 
abstractly encode the same properties of $C$
up to a renaming of the abstract values in $D_i$. 
If $f^\sharp_1:D_1 \ra D_1$ and $f^\sharp_2:D_2 \ra D_2$ are two abstract functions for $f:C\ra C$ then 
$f^\sharp_1$ is called isomorphic to $f^\sharp_2$ when $\gamma_1 \circ f_1^\sharp \circ \alpha_1 = \gamma_2 \circ f_2^\sharp \circ \alpha_2$.

\section{Constructive Galois Connections}
Constructive Galois connections (CGCs) have been defined 
by Darais and Van Horn \cite[Section~3]{dv16} to provide a Galois connection-like correspondence between 
sets rather than posets: 
$\tuple{\eta,A,B,\mu}_{\text{CGC}}$ is a CGC when $A$ and $B$ are sets, and
$\eta: A\ra B$, $\mu: B\ra \wp(A)$ satisfy the following equivalence 
$$x\in \mu(y) \:\Lra\: \eta(x)=y \eqno(\text{CGC-Corr})$$ 
The intuition is that $A$ is a carrier set of the concrete powerset domain, 
$B$ is an unordered abstract domain, $\eta$ is a representation function for concrete singletons $\{a\}$ 
while $\mu$ is a concretization function, which give rise to a sort of unordered adjunction relation. 
CGCs have the following properties. 

\begin{restatable}[CGC properties]{lemma}{rCGCprop}\label{CGCprop}
Consider a CGC $\tuple{\eta,A,B,\mu}$. 
\begin{itemize}
\item[{\rm (1)}] $\eta(a_1)=\eta(a_2)\:\Lra\: \mu(\eta(a_1))= \mu(\eta(a_2))\:\Lra\: \mu(\eta(a_1))\cap \mu(\eta(a_2)) \neq \varnothing$
\item[{\rm (2)}] $\mu(b)=\varnothing \:\Lra\: b\not\in \eta(A)$ 
\end{itemize}
\end{restatable}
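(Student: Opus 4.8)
The plan is to derive everything directly from the defining equivalence (CGC-Corr), which is the only fact available about a CGC. The one preliminary observation I would isolate is the \emph{reflexivity} of $\mu\circ\eta$: instantiating (CGC-Corr) with $y \ud \eta(x)$ turns its right-hand side into the tautology $\eta(x)=\eta(x)$, so $x\in\mu(\eta(x))$ holds for every $x\in A$. Reading (CGC-Corr) in the other direction, for a fixed $x$ it also identifies $\mu(\eta(x)) = \{x'\in A \mid \eta(x')=\eta(x)\}$ as the full $\eta$-fiber (equivalence block) of $x$. These two readings~---~``$x$ lies in its own block'' and ``$\mu(\eta(x))$ is exactly the block of $x$''~---~are what drive both parts.

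For item (1) I would prove the three-way equivalence as a cycle of implications. The implication from $\eta(a_1)=\eta(a_2)$ to $\mu(\eta(a_1))=\mu(\eta(a_2))$ is immediate, being $\mu$ applied to equal arguments. From $\mu(\eta(a_1))=\mu(\eta(a_2))$ to nonemptiness of the intersection I would invoke reflexivity: $a_1\in\mu(\eta(a_1))=\mu(\eta(a_2))$, so $a_1$ lies in $\mu(\eta(a_1))\cap\mu(\eta(a_2))$, which is therefore nonempty. Finally, from a witness $x\in\mu(\eta(a_1))\cap\mu(\eta(a_2))$ I would apply (CGC-Corr) to each membership to obtain $\eta(x)=\eta(a_1)$ and $\eta(x)=\eta(a_2)$, whence $\eta(a_1)=\eta(a_2)$, closing the cycle.

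For item (2) both directions are a single application of (CGC-Corr). If $b\in\eta(A)$, pick $a$ with $\eta(a)=b$; then $a\in\mu(b)$ by (CGC-Corr), so $\mu(b)\neq\varnothing$~---~contrapositively, $\mu(b)=\varnothing$ forces $b\notin\eta(A)$. Conversely, any $x\in\mu(b)$ satisfies $\eta(x)=b$ by (CGC-Corr), exhibiting $b\in\eta(A)$; contrapositively, $b\notin\eta(A)$ forces $\mu(b)=\varnothing$.

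I do not expect any real obstacle: the statement is a direct unfolding of (CGC-Corr). The only step that is not pure definitional rewriting is the passage from equality of the two $\mu\circ\eta$ images to nonemptiness of their intersection, which is precisely where one must \emph{produce} an element rather than rewrite an equivalence; this is supplied by the reflexivity observation $a_1\in\mu(\eta(a_1))$ recorded at the outset.
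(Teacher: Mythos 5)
Your proof is correct and takes essentially the same route as the paper's: both hinge on the same preliminary reflexivity observation that $a\in\mu(\eta(a))$ for all $a\in A$ (obtained by instantiating (CGC-Corr) at $y=\eta(a)$), use it to produce the witness for nonemptiness of the intersection, and settle part~(2) by the same two one-line applications of (CGC-Corr). The only difference is presentational: you arrange part~(1) as a cycle of three implications, whereas the paper proves the two biconditionals directly, but the underlying steps are identical.
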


\medskip
\noindent
As a consequence, we have that $\{\mu(\eta(a))\}_{a\in A}$ are the blocks of 
a partition of $A$, because $A = \cup_{a\in A} \mu(\eta(a))$ and
if $\mu(\eta(a_1))\neq \mu(\eta(a_2))$ then $\mu(\eta(a_1))\cap \mu(\eta(a_2)) =\varnothing$. 

Darais and Van Horn \cite[Section~3.1]{dv16} also define constructive Galois connections for posets (CGPs) as follows. 
$\tuple{\eta,A,B,\mu}$ is a CGP when $\tuple{A,\leq_A}$ and $\tuple{B,\leq_B}$ 
are posets (so that $\wpd(A)_\subseteq$ is a complete lattice), 
$\eta: A\ra B$ and $\mu: B\ra \wpd(A)$ are monotone and the following equivalence holds:
$$x\in \mu(y) \:\Lra\: \eta(x)\leq_B y\eqno(\text{CGP-Corr})$$
Hence, in CGP-Corr $\leq_B$ replaces $=$ of CGC-Corr. We focus on the following properties of CGPs.

\begin{restatable}[CGP properties]{lemma}{rCGPprop}\label{CGP-prop}
Consider a CGP $\tuple{\eta,A,B,\mu}$. 
\begin{itemize}
\item[{\rm (1)}] $\eta(a_1)= \eta(a_2) \:\Lra\: \mu(\eta(a_1))=\mu(\eta(a_2))$
\item[{\rm (2)}] $\mu(b) = \varnothing \:\Lra \:\downarrow\!\! b \cap \eta(A) = \varnothing$
\item[{\rm (3)}] If $B$ is a complete lattice then $\tuple{\eta^\vee,\wpd(A),B,\mu}$ is a GC
\item[{\rm (4)}] $\mu(B) = \mu(\etav(\wpd(A)))$ 
\end{itemize}
\end{restatable}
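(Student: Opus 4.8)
The plan is to derive all four items directly from the defining equivalence CGP-Corr, leaning on reflexivity and antisymmetry of $\leq_B$ and on the universal property of the lub in $B$. For item~(1) the forward implication is immediate, since $\mu$ is a function. For the converse I would instantiate CGP-Corr at the reflexive inequality $\eta(a_1)\leq_B\eta(a_1)$ to get $a_1\in\mu(\eta(a_1))$; assuming $\mu(\eta(a_1))=\mu(\eta(a_2))$ this yields $a_1\in\mu(\eta(a_2))$, hence $\eta(a_1)\leq_B\eta(a_2)$ by CGP-Corr, and symmetrically $\eta(a_2)\leq_B\eta(a_1)$, so antisymmetry gives $\eta(a_1)=\eta(a_2)$. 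For item~(2) I would rewrite both sides as the same first-order statement: $\mu(b)=\varnothing$ says $\eta(x)\not\leq_B b$ for every $x$, while $\downarrow\!\! b\cap\eta(A)=\varnothing$ says no $a\in A$ satisfies $\eta(a)\leq_B b$; these coincide by CGP-Corr, and the claimed equivalence follows by contraposition on the corresponding ``nonempty'' versions.

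For item~(3), under the hypothesis that $B$ is a complete lattice (so that $\eta^\vee$ is everywhere defined and both $\wpd(A)_\subseteq$ and $B$ are posets), I would verify the adjunction $\eta^\vee(X)\leq_B b\:\Lra\:X\subseteq\mu(b)$ for all $X\in\wpd(A)$ and $b\in B$. Unfolding $\eta^\vee(X)=\vee_{a\in X}\eta(a)$ and applying the universal property of the join, $\eta^\vee(X)\leq_B b$ holds iff $\eta(a)\leq_B b$ for every $a\in X$; by CGP-Corr this is exactly $a\in\mu(b)$ for every $a\in X$, i.e.\ $X\subseteq\mu(b)$. Monotonicity of the two maps then comes for free from the adjunction, so no separate check is required.

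Item~(4) I would obtain from item~(3) via the standard cancellation law of a GC. The inclusion $\mu(\etav(\wpd(A)))\subseteq\mu(B)$ is trivial, since $\etav(\wpd(A))\subseteq B$. For the reverse inclusion, given any $b\in B$ I would take the witness $X\ud\mu(b)\in\wpd(A)$: the upper--lower adjoint identity $\mu\circ\eta^\vee\circ\mu=\mu$, valid by the GC of item~(3), gives $\mu(\eta^\vee(\mu(b)))=\mu(b)$, whence $\mu(b)\in\mu(\etav(\wpd(A)))$.

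The only genuinely delicate point is item~(4): one must recognise that the asserted set equality is not a surjectivity statement about $\eta^\vee$ but rather the coincidence of the two images of $\mu$, which is precisely what the GC cancellation $\mu\circ\eta^\vee\circ\mu=\mu$ delivers once~(3) is available — the key move is choosing $X=\mu(b)$. The remaining items are routine manipulations of CGP-Corr; the mild caveat is that item~(4), like~(3), tacitly presupposes enough joins in $B$ for $\etav$ to be defined, so I would state it under the same complete-lattice assumption.
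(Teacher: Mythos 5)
Your proof is correct and follows essentially the same route as the paper's: reflexivity and antisymmetry of $\leq_B$ for item~(1), the contrapositive reformulation of both sides via CGP-Corr for item~(2), direct verification of the adjunction $\eta^\vee(X)\leq_B b \Lra X\subseteq\mu(b)$ for item~(3), and the cancellation law $\mu\circ\eta^\vee\circ\mu=\mu$ (with witness $X=\mu(b)$) for item~(4). Your caveat that item~(4) tacitly inherits the complete-lattice hypothesis of item~(3) is also consistent with the paper, whose proof of~(4) explicitly invokes~(3).
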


\begin{example}\label{exCGC}
\rm
Consider $\mathbb{Z}$ with the discrete partial order, so that $\wpd(\bZ)=\wp(\bZ)$, 
$B\ud \{+,\top \}$ with ordering $+ \leq \top$,
$\eta:\bZ\ra B$ be defined by $\eta(x) \ud \textbf{if} \:x>0\: \textbf{then}\, +\, \textbf{else}\, \top$ and $\mu: \bZ\ra \wp(\bZ)$ be defined
by $\mu(+)=\bZ_{>0}$ and $\mu(\top)=\bZ$. It turns out that $\tuple{\eta,\bZ,B,\mu}$ is not a CGC, because $1\in \mu(\top)$ 
while $+ = \eta(1) \neq \top$. Instead, $+=\eta(1) \leq \top$ holds, and indeed this is a CGP. Notice that $\{\mu(\eta(z))~|~z\in \bZ\} =
\{\bZ_{>0},\bZ\}$ is not a partition of $\bZ$. Besides, 
if $B'=\{-,0,+,\bot\}$, $\beta:\bZ\ra B'$ encodes the sign of an integer, and $\delta:B'\ra \wp(\bZ)$ is defined by:
$\delta(-)=\bZ_{<0}$, $\delta(0)=\{0\}$, $\delta(+)=\bZ_{>0}$, $\delta(\bot)=\varnothing$, then $\tuple{\beta,\bZ,B',\delta}$ is clearly
a CGC.
\qed
\end{example}

In the following we will need to compare CGCs with a common concrete carrier set.  
Thus, consider two 
CGCs $\cC_1 = \tuple{\eta_1,A,B_1,\mu_1}$ and $\cC_2 = \tuple{\eta_2,A,B_2,\mu_2}$. 
Then, $\cC_1$ is defined to be \emph{more precise than} $\cC_2$ (or, $\cC_2$ is more abstract than $\cC_1$) 
when $\mu_2(B_2) \subseteq \mu_1 (B_1)$, and this is denoted by $\cC_1 \sqsubseteq \cC_2$. 
Also, $\cC_1$ and $\cC_2$ are 
\emph{isomorphic} when $\cC_1 \sqsubseteq \cC_2$ and $\cC_2 \sqsubseteq \cC_1$, i.e., when
$\mu_1 (B_1) = \mu_2(B_2)$, and this is denoted by
$\cC_1 \cong \cC_2$. The intuition is that two CGCs
are isomorphic when they represent the same abstraction $\mu_i(B_i)$ of $\wp(A)$ 
up to a renaming of the abstract values. 
This notion of isomorphism is justified by the following  result, where $f_{1,2}$ and $f_{2,1}$ play the role
of renaming functions for abstract values. 

\begin{restatable}[CGC Isomorphism]{lemma}{rCGCproptwo}\label{CGCprop2}
Consider $\cC_1=\tuple{\eta_1,A,B_1,\mu_1}$ and $\cC_2=\tuple{\eta_2,A,B_2,\mu_2}$ CGCs. Then,
$\cC_1 \cong \cC_2$ iff there exist $f_{1,2}:\eta_1 (A) \ra \eta_2(A)$ and  
$f_{2,1}:\eta_2 (A) \ra \eta_1(A)$
such that $f_{1,2} \circ f_{2,1} = \id =f_{2,1} \circ f_{1,2}$, $\mu_1 \circ \eta_1 = \mu_2 \circ f_{1,2} \circ \eta_1$
and $\mu_2 \circ \eta_2 = \mu_1 \circ f_{2,1} \circ \eta_2$.
\end{restatable}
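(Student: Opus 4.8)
The key structural fact I would establish first is that, for any CGC $\tuple{\eta,A,B,\mu}$, the restriction $\mu|_{\eta(A)}$ is a bijection from $\eta(A)$ onto the induced partition $P \ud \{\mu(\eta(a)) \mid a\in A\}$, and that the nonempty members of the image $\mu(B)$ are exactly the blocks in $P$. Surjectivity onto $P$ is immediate. Injectivity follows from Lemma~\ref{CGCprop}(1): if $b_1,b_2\in\eta(A)$, say $b_i=\eta(a_i)$, and $\mu(b_1)=\mu(b_2)$, then $\mu(\eta(a_1))=\mu(\eta(a_2))$ forces $\eta(a_1)=\eta(a_2)$, i.e.\ $b_1=b_2$. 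Finally, Lemma~\ref{CGCprop}(2) gives $\mu(b)=\varnothing \Lra b\notin\eta(A)$, and every block $\mu(\eta(a))$ contains $a$ by the defining equivalence of a CGC, hence is nonempty; so the nonempty part of $\mu(B)$ coincides with $P$. Applying this to both $\cC_1$ and $\cC_2$ yields bijections $\mu_i|_{\eta_i(A)}\colon \eta_i(A)\ra P_i$, where $P_i$ is the partition induced by $\cC_i$.

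\textbf{($\Ra$)} Assume $\cC_1\cong\cC_2$, i.e.\ $\mu_1(B_1)=\mu_2(B_2)$. Restricting both sides to their nonempty members and using the first paragraph gives $P_1=P_2$; call this common partition $P$. I would then define the renaming maps as the composites through $P$,
\[
f_{1,2}\ud (\mu_2|_{\eta_2(A)})^{-1}\circ\mu_1|_{\eta_1(A)}, \qquad
f_{2,1}\ud (\mu_1|_{\eta_1(A)})^{-1}\circ\mu_2|_{\eta_2(A)}.
\]
These have the stated types and are mutually inverse, since $\mu_1|_{\eta_1(A)}$ and $\mu_2|_{\eta_2(A)}$ are bijections onto the same set $P$. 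The two equations are then routine: for every $a\in A$, $f_{1,2}(\eta_1(a))$ is by construction the unique $b\in\eta_2(A)$ with $\mu_2(b)=\mu_1(\eta_1(a))$, whence $\mu_2(f_{1,2}(\eta_1(a)))=\mu_1(\eta_1(a))$, that is $\mu_2\circ f_{1,2}\circ\eta_1=\mu_1\circ\eta_1$; the companion equation is symmetric.

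\textbf{($\La$)} Conversely, assume the renaming maps exist with the three stated properties. From $\mu_1\circ\eta_1=\mu_2\circ f_{1,2}\circ\eta_1$, every block $\mu_1(\eta_1(a))$ equals $\mu_2(f_{1,2}(\eta_1(a)))$ with $f_{1,2}(\eta_1(a))\in\eta_2(A)$, so $P_1\subseteq P_2$; the companion equation $\mu_2\circ\eta_2=\mu_1\circ f_{2,1}\circ\eta_2$ gives $P_2\subseteq P_1$, whence $P_1=P_2$. Since the nonempty members of $\mu_i(B_i)$ are exactly $P_i$, this already identifies the nonempty parts of the two images, and the bijectivity hypotheses $f_{1,2}\circ f_{2,1}=\id=f_{2,1}\circ f_{1,2}$ guarantee the renaming is genuinely invertible.

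\emph{The delicate point}, and the step I expect to require the most care, is the empty block: by Lemma~\ref{CGCprop}(2), $\varnothing\in\mu_i(B_i)$ holds precisely when $B_i\neq\eta_i(A)$, a condition that the renaming maps — which only see $\eta_i(A)$ — do not directly constrain. I would resolve this by working under the convention, consistent with the union-closure of the associated partitioning GC (in which $\varnothing$ is the empty union of blocks), that $\varnothing$ is always present in $\mu_i(B_i)$; under this convention agreement on the empty block is automatic and the equivalence $\cC_1\cong\cC_2$ follows from $P_1=P_2$. Otherwise one must carry the membership of $\varnothing$ as an explicit bookkeeping side condition, which is the only place where the statement is not an immediate consequence of the bijection between $\eta_i(A)$ and the common partition.
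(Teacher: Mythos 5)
Your proof is correct and, at its core, takes the same route as the paper's: the $(\Ra)$ direction pairs each block $\mu_1(\eta_1(a))$ with the unique element of $\eta_2(A)$ representing it (existence because blocks are nonempty, so Lemma~\ref{CGCprop}~(2) applies; uniqueness by Lemma~\ref{CGCprop}~(1)), and the $(\La)$ direction pushes nonempty members of $\mu_1(B_1)$ through the intertwining equations. The difference is one of packaging, and yours is the cleaner: the paper defines $f_{1,2}(\eta_1(a)) \ud \eta_2(x^2_a)$ via a chosen witness $x^2_a$ and then verifies the inverse identities pointwise, leaving well-definedness of $f_{1,2}$ implicit (it does hold, by exactly the injectivity observation you make from Lemma~\ref{CGCprop}~(1)); your definition of $f_{1,2}$ and $f_{2,1}$ as composites through the common partition $P$ makes well-definedness, bijectivity and the two equations automatic.

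Your ``delicate point'' about $\varnothing$ is not a weakness of your write-up: it is a genuine gap in the paper's own proof of the $(\La)$ direction. There, the paper picks a point $x\in\mu_1(b_1)$ and uses Lemma~\ref{CGCprop}~(2) to force $b_1\in\eta_1(A)$; this only shows that the \emph{nonempty} members of $\mu_1(B_1)$ and $\mu_2(B_2)$ coincide, and the case $\mu_1(b_1)=\varnothing$ is silently dropped. It cannot be recovered: take $A=\{*\}$, $\cC_1$ with $B_1=\{b\}$, $\eta_1(*)=b$, $\mu_1(b)=\{*\}$, and $\cC_2$ with $B_2=\{c,\bot\}$, $\eta_2(*)=c$, $\mu_2(c)=\{*\}$, $\mu_2(\bot)=\varnothing$. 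Both are CGCs, and $f_{1,2}(b)\ud c$, $f_{2,1}(c)\ud b$ satisfy all three conditions of the lemma, yet $\mu_1(B_1)=\{\{*\}\}\neq\{\{*\},\varnothing\}=\mu_2(B_2)$, so $\cC_1\not\cong\cC_2$. Hence the $(\La)$ direction as stated holds only up to the nonempty isomorphism $\cong_\varnothing$ that the paper introduces immediately \emph{after} this lemma (or under the extra hypothesis that both CGCs are CGIs, i.e.\ $B_i=\eta_i(A)$). Your proposed convention that $\varnothing$ always belongs to $\mu_i(B_i)$ is a third possible repair, but note it is not licensed by the paper's definition of CGC (a CGI has no empty block); carrying membership of $\varnothing$ as an explicit side condition, as you suggest in the alternative, is the faithful fix.
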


We also define a notion of \emph{nonempty isomorphism} $\cong_\varnothing$
which does not take into account possible empty sets in $\mu(B_i)$: 
$\tuple{\eta_1,A,B_1,\mu_1} \cong_\varnothing \tuple{\eta_2,A,B_2,\mu_2}$ when 
$\mu_1 (B_1)\cup \{\varnothing\} = \mu_2(B_2)\cup \{\varnothing\}$. This is justified by the observation
that for any CGC $\tuple{\eta,A,B,\mu}$, we have that 
$\tuple{\eta,A,B,\mu} \cong_\varnothing \tuple{\eta,A,\eta(A),\mu}$, because, by Lemma~\ref{CGP-prop}~(2),
all the abstract values in $B\smallsetminus \eta(A)$ represent the empty set. 
These can therefore be viewed as ``useless'' abstract values and lead to a notion of \emph{constructive Galois
insertion} (CGI) which is the analogue of GI: $\tuple{\eta,A,B,\mu}$ is a CGI when it is a CGC and $\eta$ is surjective.

\section{Partitioning Galois Connections}
Partitioning Galois connections/insertions (PGCs/PGIs) 
have been introduced in \cite[Section~5]{CC94}: given a partition $\cP$ of a set $A$, 
any subset $X\in \wp(A)$ is over-approximated by the unique minimal cover of $X$ through blocks in $\cP$. PGCs have
been studied and used in \cite{rt05,rt07} for generalizing strong preservation of temporal logics in model checking. 
Let $\cG=\tuple{\alpha,\wp(A)_\subseteq,D_\leq,\gamma}$ be a Galois connection, 
where $A$ is any carrier set and $D$ is a poset, and let $\prt(\cG)\ud \{\gamma(\alpha(\{a\}))\}_{a\in A}$. 
$\cG$ is called a
\emph{partitioning Galois connection} when: (1)~$\prt(\cG)$ 
is a partition of $A$; (2)~$\gamma$ is additive, i.e., $\gamma$ preserves arbitrary lub's. 
The main feature of a PGC is that any abstract value $d$ represents a union of blocks of the partition $\prt(\cG)$, 
namely $\gamma(d) = \cup_{a\in \gamma(d)} \gamma(\alpha(\{a\}))$, and, vice versa, for any set of blocks 
$\{\gamma(\alpha(\{a\}))~|~ a\in S\}$ of the partition $\prt(\cG)$, 
for some $S\in\wp(A)$, there exists $d\in A$ such that $\gamma(d)
= \cup \{\gamma(\alpha(\{a\}))~|~ a\in S\}$, where $d=\alpha(S)$. 
In other terms, the abstract domain $D$ is a representation of 
all the possible unions of blocks in $\prt(\cG)$. Although this is the standard definition of PGC, instead of representing 
all the possible unions of blocks of a partition, one could equivalently represent no union of blocks at all: 
this means that the condition~(2) of PGCs of having an additive concretization map $\gamma$ could be replaced by $(2')$: 
if $x,y\in D$ and $x,y$ are uncomparable then $\gamma(x \vee_D y)=A$.  Hence, if  
$\alpha(\{a_1\})$ and $\alpha(\{a_2\})$ represent in $D$ two different blocks then their lub represent no information at all 
(i.e., $\gamma(\alpha(\{a_1,a_2\}))=A$).

It turns out that the notion of CGC is completely equivalent to that of PGC, in the following precise meaning 
based on a pair of invertible transforms of CGCs and PGCs.   
\begin{restatable}[CGC-PGC Equivalence]{theorem}{rteoPGC}\label{teoPGC} \ \ 
\begin{itemize}
\item[{\rm (1)\/}] If 
$\cC = \tuple{\eta,A,B,\mu}$ is a CGC then {\rm $\bT_{\text{PGC}}(\cC)$} $\ud 
\tuple{\eta^\diamond,\wp(A)_\subseteq,\wp(B)_\subseteq,\mu^*}$ is a PGC.
\item[{\rm (2)\/}] If $\cG=\tuple{\alpha,\wp(A)_\subseteq,D_\leq,\gamma}$ is a PGC then 
{\rm $\bT_{\text{CGC}}(\cG)$} $\ud \tuple{\sing{\alpha},A,\{\alpha(\{a\})~|~a\in A\},\gamma}$
is a CGC. 
\item[{\rm (3)\/}] The transforms {\rm $\bT_{\text{PGC}}$} and {\rm $\bT_{\text{CGC}}$} 
are one the inverse of the other, up to nonempty 
isomorphism. 
\end{itemize}
\end{restatable}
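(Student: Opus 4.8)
The plan is to prove the three parts in sequence, establishing the well-definedness of each transform first, then verifying the defining properties, and finally checking the inversion up to nonempty isomorphism.

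For part (1), I would start from a CGC $\cC = \tuple{\eta,A,B,\mu}$ and show that $\bT_{\text{PGC}}(\cC) = \tuple{\eta^\diamond,\wp(A)_\subseteq,\wp(B)_\subseteq,\mu^*}$ is a PGC. First I would verify that $\tuple{\eta^\diamond,\wp(A)_\subseteq,\wp(B)_\subseteq,\mu^*}$ is actually a GC, i.e.\ that $\eta^\diamond(X)\subseteq Y \Lra X\subseteq \mu^*(Y)$ for all $X\subseteq A$, $Y\subseteq B$; by the definitions $\eta^\diamond(X)=\{\eta(x)\mid x\in X\}$ and $\mu^*(Y)=\cup_{y\in Y}\mu(y)$, this equivalence should reduce directly to the correspondence (CGC-Corr), since $\eta(x)\in Y$ iff $x\in\mu(\eta(x))\subseteq\mu^*(Y)$ in one direction and a symmetric use of (CGC-Corr) in the other. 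Then I must check the two PGC conditions: that $\mu^*$ is additive, which is immediate because $\mu^*$ is a union and unions distribute over unions; and that $\prt(\cG)=\{\mu^*(\eta^\diamond(\{a\}))\}_{a\in A}=\{\mu(\eta(a))\}_{a\in A}$ is a partition of $A$, which is exactly the consequence of Lemma~\ref{CGCprop} recorded in the remark following it.

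For part (2), given a PGC $\cG=\tuple{\alpha,\wp(A)_\subseteq,D_\leq,\gamma}$, I would show $\bT_{\text{CGC}}(\cG)=\tuple{\sing{\alpha},A,\{\alpha(\{a\})\mid a\in A\},\gamma}$ satisfies (CGC-Corr), namely $x\in\gamma(\sing{\alpha}(a)) \Lra \sing{\alpha}(x)=\sing{\alpha}(a)$, where $\sing{\alpha}(a)=\alpha(\{a\})$. The key facts to invoke are that, in a PGC, the sets $\gamma(\alpha(\{a\}))$ form a partition and each such set equals the block of the partition containing $a$. Thus $x\in\gamma(\alpha(\{a\}))$ means $x$ lies in the block of $a$, which by disjointness of the partition is equivalent to the block of $x$ coinciding with the block of $a$, i.e.\ $\gamma(\alpha(\{x\}))=\gamma(\alpha(\{a\}))$; translating this back through the injectivity of $\gamma$ on the abstractions $\alpha(\{\cdot\})$ (itself a consequence of the blocks being a partition) yields $\alpha(\{x\})=\alpha(\{a\})$, which is the required equivalence. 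A subtle point I would be careful about here is whether $\alpha(\{x\})=\alpha(\{a\})$ follows from $\gamma(\alpha(\{x\}))=\gamma(\alpha(\{a\}))$; this holds because on the abstract values of the form $\alpha(\{a\})$ the map $\gamma$ is injective, as two distinct blocks are disjoint and hence have distinct images.

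For part (3), I would show the two transforms are mutually inverse up to nonempty isomorphism $\cong_\varnothing$. Starting from a CGC $\cC=\tuple{\eta,A,B,\mu}$, applying $\bT_{\text{PGC}}$ and then $\bT_{\text{CGC}}$ yields a CGC whose concretization images are $\{\mu^*(\{b\})\mid b\in B\}$ together with the unions recorded, and I would compare $\mu(B)$ with the reconstructed concretization image, showing they agree after adjoining $\varnothing$; the discrepancy is exactly the possible empty blocks, which is why the statement is only up to $\cong_\varnothing$. In the other direction, starting from a PGC and applying $\bT_{\text{CGC}}$ then $\bT_{\text{PGC}}$, I would check that the resulting abstract domain recovers all unions of blocks of $\prt(\cG)$, matching $\gamma(D)$ up to the empty set. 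The main obstacle I anticipate is bookkeeping the empty-set discrepancies precisely enough to justify $\cong_\varnothing$ rather than strict equality, together with verifying that the renaming between the abstract carrier $\{\alpha(\{a\})\mid a\in A\}$ produced by $\bT_{\text{CGC}}$ and the powerset $\wp(B)$ produced by $\bT_{\text{PGC}}$ respects the $\mu$-images; here I expect Lemma~\ref{CGCprop2} to supply the needed renaming functions $f_{1,2}$ and $f_{2,1}$, reducing the inversion claim to an identity between concretization images modulo $\varnothing$.
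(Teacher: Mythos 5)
Your proposal follows essentially the same route as the paper's proof: for (1) a direct verification of the adjunction $\eta^\diamond(X)\subseteq Y \Lra X\subseteq\mu^*(Y)$ from (CGC-Corr), trivial additivity of $\mu^*$, and the partition property via Lemma~\ref{CGCprop}; for (2) the argument that $x\in\gamma(\alpha(\{a\}))$ forces the blocks $\gamma(\alpha(\{x\}))$ and $\gamma(\alpha(\{a\}))$ to coincide and hence $\alpha(\{x\})=\alpha(\{a\})$; and for (3) a comparison of concretization images modulo $\varnothing$ on the CGC side and, on the PGC side, recovery of $\gamma(\alpha(\wp(A)))$ via additivity of $\gamma$ (where the paper in fact obtains exact GC isomorphism, and where your appeal to Lemma~\ref{CGCprop2} is harmless but unnecessary, since isomorphism is defined directly as equality of concretization images). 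One justification should be repaired: the injectivity of $\gamma$ on $\{\alpha(\{a\})~|~a\in A\}$ does not follow from disjointness of the blocks (disjointness says nothing about two \emph{distinct} abstract values sharing the same $\gamma$-image) but from the standard GC identity $\alpha\circ\gamma\circ\alpha=\alpha$, which is exactly how the paper passes from $\gamma(\alpha(\{a'\}))=\gamma(\alpha(\{a\}))$ to $\alpha(\{a'\})=\alpha(\{a\})$.
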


\begin{example}\label{exsign}\rm 
Consider the standard abstract domain for sign analysis as encoded by the GI 
$\cS=\tuple{\alpha,\wp(\bZ),\Sign,\gamma}$, where $\Sign$ is the following lattice:

{\centering{
    \begin{tikzpicture}[scale=0.5]
      \draw (0,6) node[name=l] {{\scriptsize$\bZ$}};
      \draw (2,4) node[name=k] {{\scriptsize $\geq 0$}};
      \draw (0,4) node[name=j] {{\scriptsize $\neq 0$}};
      \draw (-2,4) node[name=i] {{\scriptsize $\leq 0$}};
      \draw (2,2) node[name=h] {{\scriptsize $>0$}};
      \draw (-2,2) node[name=g] {{\scriptsize $<0$}};
      \draw (0,2) node[name=d] {{\scriptsize $=0$}};
      \draw (0,0) node[name=a] {{\scriptsize$\varnothing$}};
 
      \draw[semithick] (i) -- (g);
      \draw[semithick] (i) -- (d);

      \draw[semithick] (j) -- (g);
      \draw[semithick] (j) -- (h);

      \draw[semithick] (k) -- (h);
      \draw[semithick] (k) -- (d);

      \draw[semithick] (l) -- (i);
      \draw[semithick] (l) -- (j);
      \draw[semithick] (l) -- (k);
      \draw[semithick] (a) -- (d);
      \draw[semithick] (a) -- (h);
      \draw[semithick] (a) -- (g);
\end{tikzpicture}
\par
} }

\noindent
while abstraction and concretization maps are defined as usual. Let us observe that $\cS$ is indeed
a PGC (more precisely, a PGI), where $\prt(\cS)=\{\bZ_{<0},\bZ_{=0},\bZ_{>0}\}$. 
Then, $\bT_{\text{CGC}}(\cS)$ provides a CGC which is nonempty isomorphic to the
CGC $\cC=\tuple{\beta,\bZ,B',\delta}$ of Example~\ref{exCGC}: these two CGCs only differ 
for the element $\bot\in B'$ whose meaning is $\varnothing=\delta(\bot)$. 
Conversely, $\bT_{\text{PGC}}(\cC)$ 
is a PGC which is isomorphic to the PGI $\cS$: $\wp(B')$ is the abstract domain of $\bT_{\text{PGC}}(\cC)$, so that,
since $B'$ includes the ``useless'' value $\bot$, we obtain a PGC rather than a PGI, because its concretization map $\delta^{*}$
is not injective, e.g., $\delta^*(\{\bot,+\})=\delta^*(\{\bot\})$.  
\qed
\end{example}

Furthermore, it turns out that the CGC/PGC transforms of Theorem~\ref{teoPGC} preserve the 
relative precision relations as follows.  
\begin{restatable}{corollary}{rcoroPGC}\label{coroPGC}
If $\cC_1$ and $\cC_2$ are CGCs  then 
$\cC_1 \sqsubseteq \cC_2$ iff {\rm $\bT_{\text{PGC}}(\cC_1)$} $\sqsubseteq$ {\rm $\bT_{\text{PGC}}(\cC_2)$}. 
\end{restatable}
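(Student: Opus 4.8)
The plan is to reduce both precision relations to statements about the partitions induced by the two CGCs and then to compare them. Write $\cC_i = \tuple{\eta_i,A,B_i,\mu_i}$ for $i\in\{1,2\}$. First I would recall, from Lemma~\ref{CGCprop} and the remark following it, that CGC-Corr forces $\mu_i(b)=\{x\in A\mid \eta_i(x)=b\}$; hence the nonempty members of $\mu_i(B_i)$ are exactly the blocks $\mu_i(\eta_i(a))$ of a partition $\cP_i$ of $A$, while the abstract values in $B_i\smallsetminus\eta_i(A)$ contribute only $\varnothing$. By definition, $\cC_1\sqsubseteq\cC_2$ is $\mu_2(B_2)\subseteq\mu_1(B_1)$. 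On the other side, since $\bT_{\text{PGC}}(\cC_i)=\tuple{\eta_i^\diamond,\wp(A)_\subseteq,\wp(B_i)_\subseteq,\mu_i^*}$, the definition of $\sqsubseteq$ for plain GCs unfolds $\bT_{\text{PGC}}(\cC_1)\sqsubseteq\bT_{\text{PGC}}(\cC_2)$ to $\mu_2^*(\eta_2^\diamond(\wp(A)))\subseteq\mu_1^*(\eta_1^\diamond(\wp(A)))$.

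The technical core is to identify the image of the closure operator $\mu_i^*\circ\eta_i^\diamond$. Expanding the two liftings gives $\mu_i^*(\eta_i^\diamond(X))=\bigcup_{x\in X}\mu_i(\eta_i(x))$ for $X\subseteq A$, and as $X$ ranges over $\wp(A)$ the set $\eta_i^\diamond(X)$ ranges over all of $\wp(\eta_i(A))$; therefore
\[
\mu_i^*(\eta_i^\diamond(\wp(A)))=\Big\{\textstyle\bigcup_{b\in Y}\mu_i(b)\;\Big|\;Y\subseteq B_i\Big\},
\]
so the PGC abstraction is precisely the family of all unions of members of $\mu_i(B_i)$ (the empty union giving $\varnothing$). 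With this identity the forward implication is immediate by monotonicity of union-closure: from $\mu_2(B_2)\subseteq\mu_1(B_1)$ every union of members of $\mu_2(B_2)$ is a union of members of $\mu_1(B_1)$, whence $\bT_{\text{PGC}}(\cC_1)\sqsubseteq\bT_{\text{PGC}}(\cC_2)$.

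The step I expect to be the main obstacle is the converse, namely recovering containment of the generating families $\mu_i(B_i)$ from containment of their union-closures. The approach I would take is to characterise the blocks intrinsically inside the abstraction as its $\subseteq$-minimal nonempty elements — this holds because distinct blocks are disjoint and every representable set is a union of blocks — and then to argue that a minimal nonempty set on the $\cC_2$ side, sitting inside the $\cC_1$ abstraction, must already belong to $\mu_1(B_1)$; combining the minimality characterisation with the disjointness of the $\cP_1$-blocks is exactly the delicate point. Throughout I would keep track of $\varnothing$ separately, since empty representable sets are irrelevant to precision and are precisely what the nonempty isomorphism $\cong_\varnothing$ discards; handling this bookkeeping completes the equivalence and yields Corollary~\ref{coroPGC}.
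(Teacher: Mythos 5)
Your unfolding of both precision relations is correct, and so is the computation that $\mu_i^*(\eta_i^\diamond(\wp(A)))$ is the \emph{union-closure} $\{\bigcup_{b\in Y}\mu_i(b)\mid Y\subseteq B_i\}$ of the blocks (as $X$ ranges over $\wp(A)$, $\eta_i^\diamond(X)$ indeed ranges over $\wp(\eta_i(A))$, and values outside $\eta_i(A)$ contribute only $\varnothing$ by Lemma~\ref{CGCprop}~(2)); the forward implication then goes through as you say. The genuine gap is exactly the step you flag and defer in the converse, and it cannot be closed: a $\subseteq$-minimal nonempty element of the $\cC_2$-abstraction is a block of $\cP_2\ud\{\mu_2(\eta_2(a))\}_{a\in A}$, and its membership in the $\cC_1$-abstraction only makes it a \emph{union} of $\cP_1$-blocks, not a single $\cP_1$-block, whereas membership in $\mu_1(B_1)$ requires the latter. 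Concretely, take $A=\{1,2\}$; let $\cC_1$ have $B_1=\{p,q\}$ with $\eta_1(1)=p$, $\eta_1(2)=q$, $\mu_1(p)=\{1\}$, $\mu_1(q)=\{2\}$, and let $\cC_2$ have $B_2=\{r\}$ with $\eta_2$ constantly $r$ and $\mu_2(r)=\{1,2\}$. Both satisfy (CGC-Corr); the abstractions of the transformed PGCs are $\{\varnothing,\{1\},\{2\},\{1,2\}\}$ and $\{\varnothing,\{1,2\}\}$, so $\bT_{\text{PGC}}(\cC_1)\sqsubseteq\bT_{\text{PGC}}(\cC_2)$, yet $\{1,2\}\in\mu_2(B_2)\smallsetminus\mu_1(B_1)$, so $\cC_1\not\sqsubseteq\cC_2$. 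Thus the right-to-left direction fails under the definitions as literally stated, and no bookkeeping of $\varnothing$ repairs it: your minimality strategy establishes refinement of partitions, which is strictly weaker than containment of the families $\mu_i(B_i)$.

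For comparison, the paper's proof closes this step by asserting, ``as in the proof of Theorem~\ref{teoPGC}~(3)'', the identity $\mu_i^*(\eta_i^\diamond(\wp(A)))=\mu_i(\eta_i(A))$, i.e., that the abstraction of the transformed PGC consists of the blocks alone. Your own expansion shows why this cannot be: the computation in Theorem~\ref{teoPGC}~(3) concerns only the singleton images $\{\eta^\diamond(\{a\})\mid a\in A\}$ (the abstract domain of $\bT_{\text{CGC}}(\bT_{\text{PGC}}(\cC))$), whereas $\eta_i^\diamond(\wp(A))$ contains every subset of $\eta_i(A)$, so the image is union-closed. In other words, your ``delicate point'' is not an artifact of your route but an intrinsic obstruction, and the paper's proof papers over it with an incorrect identity. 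What is salvageable: the left-to-right implication (your first half), and the full equivalence if one weakens the CGC ordering to refinement, i.e., reads $\cC_1\sqsubseteq\cC_2$ as ``every element of $\mu_2(B_2)$ is a union of elements of $\mu_1(B_1)$'' --- under that reading your minimality argument, combined with disjointness of the $\cP_1$-blocks, does complete the converse.
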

As a consequence, one can define a lattice of CGCs, ordered w.r.t.\ their relative precision
up to isomorphism, which is order-theoretically isomorphic to the 
so-called lattice of partitioning abstract domains \cite[Theorem~3.2]{rt07}. 

Let us also recall that \cite{dv16} also put forward a definition of 
Kleisli Galois connection (KGC) between posets, which relies on a ``monadic'' notion
of abstraction/concretization maps. Actually,  this class of constructive abstractions is shown
in \cite[Section~6]{dv16} 
to be equivalent to CGCs, where this isomorphism includes the notion of soundness (and optimality) for
abstract functions. Hence, we do not need to replicate our isomorphism between KGCs and PGCs, which automatically
follows.

\paragraph*{CGCs as Least Disjunctive Bases.}
Given a CGC
$\cC = \tuple{\eta,A,B,\mu}$, 
Theorem~\ref{teoPGC} shows that
$\bT_{\text{PGC}}(\cC) =
\tuple{\eta^\diamond,\wp(A)_\subseteq,\wp(B)_\subseteq,\mu^*}$ is a PGC.
We observe that $\{\{x\} ~|~x\in B\}$ is the set of join-irreducible elements of the complete lattice $\wp(B)_\subseteq$~---~recall 
that  an element $x$ of a complete lattice is join-irreducible when, for any $S$, $x=\vee S \Ra x\in S$. 
In abstract interpretation terms \cite{gr98}, this observation 
means that $\{\{x\} ~|~x\in B\}$ can be essentially viewed 
as the so-called least disjunctive basis of the partitioning abstract domain $\wp(B)_\subseteq$. 
Least disjunctive basis have been introduced in \cite{gr98} as an inverse operation
to the disjunctive completion of abstract domains, that is, the least disjunctive refinement of an abstract domain $D$. 
Given an abstract domain $D$, its least disjunctive basis
is defined to be the most abstract domain having the same disjunctive completion as $D$. Hence, the least disjunctive basis
of $D$ reveals and therefore removes all the disjunctive information inside $D$. 
A concrete domain which is a powerset (ordered by subset inclusion) satisfies the hypotheses 
of \cite[Theorem~4.13]{gr98}, so that the least disjuctive basis of an abstract domain $D$ exists and is
characterized as the closure under arbitrary meets of the join-irreducible elements of $D$. This result can be applied  
to the abstract domain $\wp(B)_\subseteq$ of the PGC $\bT_{\text{PGC}}(\cC)$, whose least disjunctive basis 
is given by the meet-closure of $\{\{x\} ~|~x\in B\}$, where this meet-closure simply adds $\varnothing$ and $B$.  
In this sense, this section systematically reconstructed the notion of CGC as least disjunctive basis of a partitioning abstract domain.

\paragraph*{Constructive Closure Operators.}
In abstract interpretation, abstract domains up to renaming of
abstract values are encoded by closure operators, which are isomorphic to GCs \cite{CC79}. 
Hence, the isomorphism between CGCs and PGCs in Theorem~\ref{teoPGC} leads to the 
following notion. Given any set $A$,  
a map $\varphi: A \ra \wp(A)$ is a \emph{constructive closure operator} (CCO) when the following condition holds:
$x\in \varphi(y) \:\Lra\: \varphi(x) =\varphi(y)$.

\noindent
CCOs turn out to be the closure operator counterpart of CGCs, as shown by the following result. 
\begin{restatable}[CGC-CCO Equivalence]{corollary}{rteoCCO}\label{teoCCO}\ 
\begin{itemize}
\item[{\rm (1)\/}] If $\cC=\tuple{\eta,A,B,\mu}$ is a CGC then {\rm $\bT_{\text{CCO}}$} $(\cC)\ud \mu\circ \eta: A\ra \wp(A)$ is a CCO.
\item[{\rm (2)\/}] If $\varphi: A \ra \wp(A)$ is a CCO then {\rm $\bT_{\text{CGC}}$} $(\varphi)\ud \tuple{\varphi,A,\{\varphi(a)~|~a\in A\},\id}$ is a CGC.
\item[{\rm (3)\/}] The transforms {\rm $\bT_{\text{CCO}}$} and {\rm $\bT_{\text{CGC}}$} are one the inverse of the other, up to nonempty 
isomorphism. 
\end{itemize}
\end{restatable}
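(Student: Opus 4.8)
My plan is to prove parts~(1) and~(2) by a single chain of equivalences each, and part~(3) by composing the transforms in both orders; everything then follows by unwinding the definitions of CGC, of CCO, and of the two transforms, together with Lemma~\ref{CGCprop}. For part~(1), set $\varphi \ud \mu \circ \eta$, so $\varphi(a) = \mu(\eta(a))$, fix $x,y\in A$, and compute
$$x \in \varphi(y) \;\Lra\; x\in\mu(\eta(y)) \;\Lra\; \eta(x) = \eta(y) \;\Lra\; \mu(\eta(x)) = \mu(\eta(y)) \;\Lra\; \varphi(x) = \varphi(y),$$
where the middle equivalence is (CGC-Corr) instantiated at $b = \eta(y)$ and the penultimate one is Lemma~\ref{CGCprop}~(1). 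The outer equivalence $x\in\varphi(y)\Lra\varphi(x)=\varphi(y)$ is exactly the CCO condition, so $\bT_{\text{CCO}}(\cC)$ is a CCO.

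For part~(2), put $B'\ud\{\varphi(a)~|~a\in A\}\subseteq\wp(A)$, so the concretization $\id:B'\ra\wp(A)$ sends each $y=\varphi(a)$ to itself. The correspondence (CGC-Corr) to be checked, for $x\in A$ and $y=\varphi(a)\in B'$, reads $x\in\id(y)\Lra\varphi(x)=y$, i.e.\ $x\in\varphi(a)\Lra\varphi(x)=\varphi(a)$, which is the defining property of the CCO $\varphi$; hence $\bT_{\text{CGC}}(\varphi)$ is a CGC. This already settles one half of part~(3): starting from a CCO $\varphi$ we recover it on the nose, since $\bT_{\text{CCO}}(\bT_{\text{CGC}}(\varphi)) = \id\circ\varphi = \varphi$.

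The remaining half of part~(3) is where the only real care is needed. Starting from a CGC $\cC=\tuple{\eta,A,B,\mu}$, the reconstruction $\bT_{\text{CGC}}(\bT_{\text{CCO}}(\cC))$ has abstract set $\{\mu(\eta(a))~|~a\in A\}$ and identity concretization, whose image is $\mu(\eta(A))$, whereas the original concretization has image $\mu(B)$. These need not coincide, because $B$ may contain ``useless'' values outside $\eta(A)$; but Lemma~\ref{CGCprop}~(2) gives $\mu(b)=\varnothing$ for every $b\in B\smallsetminus\eta(A)$, so $\mu(B)\cup\{\varnothing\}=\mu(\eta(A))\cup\{\varnothing\}$, which is exactly $\bT_{\text{CGC}}(\bT_{\text{CCO}}(\cC))\cong_\varnothing\cC$. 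I expect this bookkeeping around the empty set to be the main (and essentially the only) obstacle: the round trip discards precisely the abstract values denoting $\varnothing$, which is why faithfulness holds up to nonempty isomorphism rather than on the nose.

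Finally, although the argument above is self-contained, its billing as a corollary of Theorem~\ref{teoPGC} is explained by noting that $\bT_{\text{CCO}}(\cC)=\mu\circ\eta$ equals the singleton lowering $\sing{(\mu^*\circ\eta^\diamond)}$ of the additive closure operator $\mu^*\circ\eta^\diamond$ that the PGC $\bT_{\text{PGC}}(\cC)$ induces on $\wp(A)_\subseteq$. Thus the CGC/CCO correspondence is the restriction to singletons of the classical GC/closure-operator correspondence, specialized to the partitioning setting, and the ``up to nonempty isomorphism'' clause of part~(3) mirrors the ``up to renaming'' clause already present in Theorem~\ref{teoPGC}.
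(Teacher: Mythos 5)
Your proof is correct and follows essentially the same route as the paper's: the identical (CGC-Corr)/Lemma~\ref{CGCprop}~(1) equivalence chain for part~(1), the direct unwinding of the CCO condition for part~(2), and the same round-trip argument for part~(3), where $\bT_{\text{CCO}}(\bT_{\text{CGC}}(\varphi))=\varphi$ holds on the nose while Lemma~\ref{CGCprop}~(2) handles the empty-set bookkeeping that forces the $\cong_\varnothing$ clause. Your closing observation that $\bT_{\text{CCO}}(\cC)$ is the singleton lowering $\sing{(\mu^*\circ\:\eta^\diamond)}$ of the closure induced by $\bT_{\text{PGC}}(\cC)$ is a correct extra justification of the corollary billing, beyond what the paper's proof records.
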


\subsection{Characterization of CGPs}
Let us now focus on CGPs. Can this class of constructive abstractions be characterized in terms of Galois connections? 
Let the carrier concrete set $A$ be a poset, 
the abstract domain $B$ be a complete lattice, 
$\eta: A\ra B$ and $\mu:B\ra \wpd(A)$ be monotone.  
By relying on CGPs/GCs transforms, we show that the class of CGPs turns out to be isomorphic to the class of GCs
of the concrete powerdomain $\wpd(A)$. 
\begin{restatable}[CGP-GC Equivalence]{theorem}{rteoCGP}\label{teoCGP}
\ 
\begin{itemize}
\item[{\rm (1)\/}] If
$\cC=\tuple{\eta,A,B,\mu}$ is a CGP then
{\rm $\bT_{\text{GC}}(\cC) \ud \tuple{\eta^\vee,\wpd(A)_\subseteq,B, \mu}$} is a GC.
\item[{\rm (2)\/}] If $\cG=\tuple{\alpha,\wpd(A)_\subseteq,D_\leq,\gamma}$ is a GC then 
{\rm $\bT_{\text{CGP}}(\cG) \ud\tuple{\lambda a.\alpha(\downarrow\!\!\{a\}),A,D,\gamma}$}
is a CGP. 
\item[{\rm (3)\/}] The transforms {\rm $\bT_{\text{GC}}$} and {\rm $\bT_{\text{CGP}}$} 
are one the inverse of the other, up to isomorphism between GCs. 
\end{itemize}
\end{restatable}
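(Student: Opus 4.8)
The plan is to treat the three parts in order, reusing the earlier results on CGPs. For part~(1) I would simply invoke Lemma~\ref{CGP-prop}~(3): since $B$ is assumed to be a complete lattice, that lemma already states that $\tuple{\eta^\vee,\wpd(A)_\subseteq,B,\mu}$ is a GC, which is exactly the assertion that $\bT_{\text{GC}}(\cC)$ is a GC. So no new work is needed there beyond the citation.

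For part~(2), write $\eta \ud \lambda a.\alpha(\downarrow\!a)$ and check the defining conditions of a CGP for $\tuple{\eta,A,D,\gamma}$. Posethood of $A$ and $D$ is inherited from $\cG$, and $\wpd(A)_\subseteq$ is a complete lattice because $A$ is a poset. Monotonicity of $\gamma$ comes from $\cG$, while monotonicity of $\eta$ follows by composing the monotone maps $a\mapsto\,\downarrow\!a$ and $\alpha$. The crux is CGP-Corr, namely $x\in\gamma(y)\Lra\alpha(\downarrow\!x)\leq_D y$. I would obtain it by instantiating the GC adjunction $\alpha(c)\leq_D d \Lra c\subseteq\gamma(d)$ at $c=\,\downarrow\!x$ and $d=y$, which turns the right-hand side into $\downarrow\!x\subseteq\gamma(y)$; since $\gamma(y)\in\wpd(A)$ is downward closed, $\downarrow\!x\subseteq\gamma(y)$ is equivalent to $x\in\gamma(y)$, giving exactly the desired equivalence.

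Part~(3) is the heart of the statement and splits into two round trips. For $\bT_{\text{CGP}}\circ\bT_{\text{GC}}$ applied to a CGP $\cC$, the new abstraction sends $a$ to $\eta^\vee(\downarrow\!a)=\bigvee_{a'\leq_A a}\eta(a')$; because $\eta$ is monotone this join is attained at the maximum $a$ of $\downarrow\!a$, so it collapses to $\eta(a)$, while the concretization $\mu$ is carried through unchanged. Hence this composite is the identity on CGPs. For $\bT_{\text{GC}}\circ\bT_{\text{CGP}}$ applied to a GC $\cG$, the new abstraction is $\eta^\vee$ with $\eta=\lambda a.\alpha(\downarrow\!a)$; evaluating on $X\in\wpd(A)$ gives $\eta^\vee(X)=\bigvee_{a\in X}\alpha(\downarrow\!a)$. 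Here I would use that every downward-closed $X$ satisfies $X=\bigvee_{a\in X}\downarrow\!a$ in $\wpd(A)$, together with the additivity of the left adjoint $\alpha$, to rewrite this as $\alpha\big(\bigvee_{a\in X}\downarrow\!a\big)=\alpha(X)$, so that $\eta^\vee=\alpha$ and $\gamma$ is unchanged.

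The main obstacle I anticipate is this last identity $\eta^\vee=\alpha$: it rests squarely on $\alpha$ being join-preserving, which is the standard fact that the left adjoint of a GC over a complete lattice preserves arbitrary lub's, combined with the decomposition of a downward-closed set into principal down-sets. A secondary point of care is the precise sense of the round trip on GCs: the abstract domain $D$ of an arbitrary GC need not be a complete lattice, so one cannot in general literally re-apply $\bT_{\text{GC}}$; stating the recovery \emph{up to isomorphism between GCs} (equality of the induced $\gamma\circ\alpha$-images, as defined in the Background) sidesteps this, and since $\eta^\vee=\alpha$ with $\gamma$ preserved, the induced closure $\gamma\circ\alpha$ coincides, yielding the claimed isomorphism.
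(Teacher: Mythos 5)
Your proposal is correct and follows essentially the same route as the paper's proof: part~(1) by citing Lemma~\ref{CGP-prop}~(3), part~(2) by checking monotonicity and deriving CGP-Corr from the adjunction instantiated at $\downarrow\!x$ together with downward-closedness of $\gamma(y)$, and part~(3) by collapsing $\eta^\vee(\downarrow\!a)$ to $\eta(a)$ via monotonicity and using $X=\cup_{a\in X}\downarrow\!a$ with additivity of the left adjoint $\alpha$. The only cosmetic difference is that you establish the pointwise identity $\eta^\vee=\alpha$ (a slight strengthening, and legitimate, since $\alpha$ preserves the existing join $X=\vee_{a\in X}\downarrow\!a$, so the join defining $\eta^\vee(X)$ exists in $D$ and equals $\alpha(X)$), whereas the paper only compares the images $\gamma(\alpha(\wpd(A)))$ to conclude isomorphism.
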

Otherwise stated, this result shows that CGPs, specifically defined for adapting 
CGCs to carrier sets which are posets, indeed boil down to plain GCs of the powerdomain $\wpd(A)$, and 
therefore lose their constructive attribute.  

\begin{example}\rm 
Consider the following lattice $D$ of integer intervals ordered by subset inclusion:

{\centering{
    \begin{tikzpicture}[scale=0.75]
      \draw (0,3.5) node[name=j] {{\scriptsize$\bZ$}};
      \draw (0,2.5) node[name=i] {{\scriptsize$[-9,+\infty)$}}; 
      \draw (0,1.5) node[name=g] {{\scriptsize$[-7,7]$}};
      \draw (2,0.75) node[name=e] {{\scriptsize$[1,5]$}};   
      \draw (-2.2,0.75) node[name=c] {{\scriptsize$[-5,-1]$}};      
      \draw (0,0) node[name=a] {{\scriptsize$\varnothing$}};

      \draw[semithick] (a) -- (c);
      \draw[semithick] (a) -- (e);
      \draw[semithick] (g) -- (c);
      \draw[semithick] (g) -- (e);
      \draw[semithick] (g) -- (i);
      \draw[semithick] (j) -- (i);

\end{tikzpicture}
\par
}
}

\noindent
By considering $\bZ$ as discretely ordered, this lattice $D$ gives rise to a GI $\cG=\tuple{\alpha,\wp(\bZ),D,\gamma}$ 
where $\gamma$ is the identity and, for example, we have that $\alpha(\{2\})=[1,5]$, $\alpha(\{0\})=\alpha(\{6\})=[-7,7]$, $\alpha(\{10\}) = [-9,+\infty)$, 
$\alpha(\{-10\})=\bZ$. Then, by Theorem~\ref{teoCGP}~(2), 
$\bT_{\text{CGP}}(\cG) =\tuple{\lambda z.\alpha(\{z\}),\bZ,D,\gamma}$ is a CGP. Ler us remark that the GI $\cG$ 
is neither partitioning
nor disjunctive, so that the intuition is that the CGP $\bT_{\text{CGP}}(\cG)$ should not be considered as being ``constructive''. 
As a limit infinite example, consider the complete lattice $E\ud \{[0,n] ~|~ n\in \bN\} \cup \{\bN\}$, ordered
by subset inclusion, which is an infinite increasing chain of intervals of natural numbers. This complete lattice gives rise to a 
GI $\cE = \tuple{\alpha,\wp(\bN),E,\gamma}$ where $\bN$ is discretely ordered and
$\gamma$ is the identity. Here, Theorem~\ref{teoCGP}~(2) yields a legal CGP
$\bT_{\text{CGP}}(\cE) = \tuple{\eta, \bN,E,\id}$ where $\eta(n)=[0,n]$ and whose constructive trait can hardly be 
identifiable.
\qed
\end{example}

\section{Soundness of Abstract Operations}
Our next objective is to transform a sound pair of functions from CGCs to PGCs and vice versa, in order to show 
that the equivalence between CGCs and PGCs also include soudness (and further optimality conditions) 
of abstract functions.  For notational simplicity, we consider unary functions, but the whole approach can be
straighforwardly generalized to generic $n$-ary functions (that we will use in some examples).

Let $\cC = \tuple{\eta,A,B,\mu}$ be a CGC, $f:A\ra A$ be a concrete function and $f_\sharp:B\ra B$ be
a corresponding abstract function. Let us recall that Darais and Van Horn~\cite{dv16} 
provide four equivalent soudness conditions for $\tuple{f,f_\sharp}$ to be sound w.r.t.\ $\cC$:  
\begin{align*}
& x\in \mu(y) \:\&\: y'=\eta(f(x)) \;\Ra\; y'=f_\sharp(y) &  \text{(CGC-Snd/$\eta\mu$)}&\\
& x\in \mu(y) \:\&\: x'=f(x) \;\Ra\; x'\in \mu(f_\sharp(y)) & \text{(CGC-Snd/$\mu\mu$)}&\\
& y=\eta(f(x)) \;\Ra\; y=f_\sharp(\eta(x)) &  \text{(CGC-Snd/$\eta\eta$)} &\\
& x'= f(x) \;\Ra\; x'\in \mu (f_\sharp(\eta(x))) & \text{(CGC-Snd/$\mu\eta$)}&
\end{align*}

On the one hand, on the transformed 
PGC $\bT_{\text{PGC}}(\cC) = \tuple{\eta^\diamond,\wp(A)_\subseteq,\wp(B)_\subseteq,\mu^*}$,
we simply the powerset lifting of concrete and abstract functions, namely, we define 
$\bT_{\text{PGC}}(\tuple{f,f_\sharp})\ud \tuple{f^\diamond,f_\sharp^\diamond}$, where
$f^\diamond:\wp(A)\ra \wp(A)$ and $f_\sharp^\diamond:
\wp(B)\ra \wp(B)$. 

On the other hand, let $\cG = \tuple{\alpha,\wp(A)_\subseteq,D_\leq,\gamma}$ be a PGC. On PGCs, we consider
concrete functions on $\wp(A)$ which are defined as 
collecting version of a mapping $g:A\ra A$ on the carrier set $A$, that is, $g^\diamond:\wp(A)\ra\wp(A)$ will be our
concrete function. 
A corresponding monotone 
abstract function $g_\sharp: D\ra D$ is called 
\emph{block-preserving} when $g_\sharp$ maps (abstract representations of) blocks 
to (abstract representations of) blocks, namely, when the following condition holds: 
$\forall a\in A.\exists a'\in A.\: g_\sharp (\alpha(\{a\})) = \alpha(\{a'\})$.

\begin{restatable}{lemma}{rlemmaone}\label{lemma-one}
If $\cG$ is a PGI, $\tuple{g^\diamond,g_\sharp}$ is sound and $g_\sharp$ is block-preserving then, for any $a\in A$, 
$g_\sharp(\alpha(\{a\})) = \alpha(\{g(a)\})$ and
$g^\diamond (\gamma(\alpha(\{a\}))) \subseteq \gamma(\alpha(\{g(a)\}))$.
\end{restatable}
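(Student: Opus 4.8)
The plan is to prove the two claims in order, deriving the functional identity~(1) first and then obtaining the set inclusion~(2) as a pointwise consequence of it. The starting point for~(1) is the soundness hypothesis, which I will use in its equivalent form $\alpha \circ g^\diamond \sqsubseteq g_\sharp \circ \alpha$ instantiated on the singleton $\{a\}$. Since $g^\diamond(\{a\}) = \{g(a)\}$, this immediately yields the one-sided comparison $\alpha(\{g(a)\}) \leq g_\sharp(\alpha(\{a\}))$. By the block-preserving hypothesis there is some $a' \in A$ with $g_\sharp(\alpha(\{a\})) = \alpha(\{a'\})$, so the whole problem reduces to upgrading $\alpha(\{g(a)\}) \leq \alpha(\{a'\})$ to an equality.

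The main obstacle~---~and the only place where the full strength of the PGI hypothesis is needed~---~is exactly this upgrade from $\leq$ to $=$, and here I will exploit the partition structure. Applying the monotone map $\gamma$ to $\alpha(\{g(a)\}) \leq \alpha(\{a'\})$ gives $\gamma(\alpha(\{g(a)\})) \subseteq \gamma(\alpha(\{a'\}))$, and both sides are blocks of $\prt(\cG)$ by definition, since $g(a), a' \in A$. Because $\prt(\cG)$ is a partition, its blocks are nonempty and pairwise disjoint, so a block contained in another block must coincide with it; hence $\gamma(\alpha(\{g(a)\})) = \gamma(\alpha(\{a'\}))$. Finally, since $\cG$ is an insertion, $\gamma$ is injective, which lets me cancel $\gamma$ and conclude $\alpha(\{g(a)\}) = \alpha(\{a'\}) = g_\sharp(\alpha(\{a\}))$, establishing~(1). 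I expect this to be the delicate step, as it is where injectivity (PGI rather than merely PGC) and the disjointness of blocks jointly intervene: without injectivity one could only recover equality of the concretized blocks, not of the abstract values themselves.

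For~(2), the idea is to run~(1) pointwise over the block $\gamma(\alpha(\{a\}))$. First I observe that every $x$ in this block satisfies $\alpha(\{x\}) = \alpha(\{a\})$: indeed $\gamma(\alpha(\{x\})) = \gamma(\alpha(\{a\}))$ holds because $x$ and $a$ lie in the same block of the partition, and injectivity of $\gamma$ then gives $\alpha(\{x\}) = \alpha(\{a\})$. Applying~(1) first to $x$ and then to $a$ yields $\alpha(\{g(x)\}) = g_\sharp(\alpha(\{x\})) = g_\sharp(\alpha(\{a\})) = \alpha(\{g(a)\})$, so $g(x)$ and $g(a)$ abstract to the same value. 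Since $\gamma \circ \alpha$ is extensive, $g(x) \in \gamma(\alpha(\{g(x)\})) = \gamma(\alpha(\{g(a)\}))$. Letting $x$ range over $\gamma(\alpha(\{a\}))$, this shows $g^\diamond(\gamma(\alpha(\{a\}))) \subseteq \gamma(\alpha(\{g(a)\}))$, which is~(2). No new difficulty arises here: the inclusion is just the statement that $g$ sends the block of $a$ into the block of $g(a)$, packaged through $g^\diamond$.
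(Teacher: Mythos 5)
Your proof is correct, but it is organized in the reverse order of the paper's and uses soundness in a different instantiation. The paper starts from soundness in the form $\alpha\circ g^\diamond\circ\gamma \sqsubseteq g_\sharp$ applied to the whole block: $g^\diamond(\gamma(\alpha(\{a\}))) \subseteq \gamma(g_\sharp(\alpha(\{a\})))$, which already \emph{is} the desired inclusion up to the equality; it then tracks the single element $g(a)$ inside the right-hand side, invokes block-preservation and the partition property to identify $\gamma(\alpha(\{g(a)\}))$ with $\gamma(g_\sharp(\alpha(\{a\})))$, and cancels $\gamma$ via $\alpha\circ\gamma=\id$. You instead instantiate the equivalent form $\alpha\circ g^\diamond \sqsubseteq g_\sharp\circ\alpha$ on the singleton $\{a\}$, obtaining the comparison $\alpha(\{g(a)\}) \leq g_\sharp(\alpha(\{a\}))$ at the level of abstract values, and upgrade it to equality by pushing through $\gamma$, using that a block of $\prt(\cG)$ contained in another block must equal it (nonemptiness plus pairwise disjointness), and then injectivity of $\gamma$~---~all legitimate, and equivalent in strength to the paper's use of $\alpha\circ\gamma=\id$. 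The real structural difference is in the inclusion: the paper gets it essentially for free (it is the first line of its proof, with the equality then substituted on the right), whereas you re-derive it pointwise, showing $\alpha(\{x\})=\alpha(\{a\})$ for every $x$ in the block of $a$ and applying your equality at each such $x$. This pointwise argument is sound (it implicitly uses extensivity of $\gamma\circ\alpha$ to place $x$ in its own block, which you should state) and has the side benefit of exhibiting explicitly that $g$ maps the block of $a$ into the block of $g(a)$ element by element; but it is longer than necessary, since once your equality $g_\sharp(\alpha(\{a\}))=\alpha(\{g(a)\})$ is in hand, a single application of $\alpha\circ g^\diamond\circ\gamma \sqsubseteq g_\sharp$ to $X=\gamma(\alpha(\{a\}))$, together with $\alpha\circ\gamma\circ\alpha=\alpha$, yields $g^\diamond(\gamma(\alpha(\{a\}))) \subseteq \gamma(g_\sharp(\alpha(\{a\}))) = \gamma(\alpha(\{g(a)\}))$ in one step.
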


Thus, given a 
sound pair $\tuple{g^\diamond,g_\sharp}$ w.r.t.\ $\cG$ where $g_\sharp$ is block preserving,
on the transformed CGC $\bT_{\text{CGC}}(\cG) = \tuple{\sing{\alpha},A,\{\alpha(\{a\})~|~a\in A\},\gamma}$
we 
consider the carrier concrete function $g:A\ra A$ and the following restriction of the abstract function $g_\sharp$ to
abstract representations of blocks:  
$g_\sharp^r: \{\alpha(\{a\})~|~a\in A\} \ra \{\alpha(\{a\})~|~a\in A\}$ which, by Lemma~\ref{lemma-one}, can be defined  
as $g_\sharp^r(\alpha(\{a\})) \ud \alpha(\{g(a)\})$. We denote this transform by 
$\bT_{\text{CGC}}(\tuple{g^\diamond,g_\sharp})\ud \tuple{g,g_\sharp^r}$.

Given two CGCs $\cC_i = \tuple{\eta_i,A,B_i,\mu_i}$, $i=1,2$, a concrete function $f:A\ra A$ and two corresponding
abstract functions $f^\sharp_i:B_i \ra B_i$, we use the notation 
$\tuple{f,f_1^\sharp} \cong \tuple{f,f_2^\sharp}$ to mean that $f_i$ is sound for $f$ w.r.t.\
$\cC_i$ and
$\mu_1 \circ f_1^\sharp \circ \eta_1 = \mu_2 \circ f_2^\sharp \circ \eta_2$, that is,  
the concrete projections of $f_1^\sharp$ and $f_2^\sharp$ coincide so that these functions can be viewed as
isomorphic.  With this notion, our correspondance between CGCs and PGCs can be extended to include soundness
as follows.

\begin{restatable}{theorem}{rtheosound}\label{theosound}\ 
\begin{itemize}
\item[{\rm (1)\/}] Let $\cC = \tuple{\eta,A,B,\mu}$ be a CGC, $f:A\ra A$ and $f_\sharp: B\ra B$. Then, 
$\tuple{f,f_\sharp}$ is sound iff {\rm $\bT_{\text{PGC}}(\tuple{f,f_\sharp})$} is sound
w.r.t.\ {\rm $\bT_{\text{PGC}}(\cC)$}. 
\item[{\rm (2)\/}] Let $\cG = \tuple{\alpha,\wp(A)_\subseteq,D_\leq,\gamma}$ be a PGC, 
$g^\diamond:\wp(A)\ra \wp(A)$, for some $g:A\ra A$,  and 
$g_\sharp:D\ra D$ be monotone and block-preserving. Then,
$\tuple{g^\diamond,g_\sharp}$ is sound iff {\rm $\bT_{\text{CGC}}(\tuple{g^\diamond,g_\sharp})$} is sound
w.r.t.\ 
{\rm $\bT_{\text{CGC}}(\cG)$}. 
\item[{\rm (3)\/}] If $\tuple{f,f_\sharp}$ is sound then {\rm $\bT_{\text{CGC}}(\bT_{\text{PGC}}(\tuple{f,f_\sharp}))$} 
$\cong \tuple{f,f_\sharp}$.
If $\tuple{g^\diamond,g_\sharp}$ is sound and $g_\sharp$ is block-preserving and additive then 
{\rm $\bT_{\text{PGC}}(\bT_{\text{CGC}}(\tuple{g^\diamond,g_\sharp}))$} $\cong \tuple{g^\diamond,g_\sharp}$.
\end{itemize}
\end{restatable}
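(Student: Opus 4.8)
The plan is to reduce every soundness statement to its pointwise form and then move between the ``singleton'' level of CGCs and the ``powerset'' level of PGCs. The key simplification is that the equivalent condition (CGC-Snd/$\eta\eta$) reads exactly as the equality $\eta\comp f = f_\sharp \comp \eta$, whereas soundness on the transformed PGC $\bT_{\text{PGC}}(\cC)=\tuple{\eta^\diamond,\wp(A)_\subseteq,\wp(B)_\subseteq,\mu^*}$ is the standard $\eta^\diamond \comp f^\diamond \sqsubseteq f_\sharp^\diamond \comp \eta^\diamond$. For part~(1), I would first observe that $\eta^\diamond(f^\diamond(X)) = \{\eta(f(x))\mid x\in X\}$ and $f_\sharp^\diamond(\eta^\diamond(X)) = \{f_\sharp(\eta(x))\mid x\in X\}$. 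If $\tuple{f,f_\sharp}$ is CGC-sound then these two sets coincide for every $X$, so the inclusion holds; conversely, evaluating the inclusion at a singleton $X=\{a\}$ forces $\eta(f(a)) = f_\sharp(\eta(a))$, i.e.\ (CGC-Snd/$\eta\eta$). This settles part~(1) with essentially no computation.

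For part~(2), recall that $g_\sharp^r$ is the restriction of $g_\sharp$ to the block representations $\{\alpha(\{a\})\mid a\in A\}$, which is well typed precisely because $g_\sharp$ is block-preserving. Soundness of $\tuple{g,g_\sharp^r}$ w.r.t.\ the CGC $\bT_{\text{CGC}}(\cG)=\tuple{\sing{\alpha},A,\{\alpha(\{a\})\mid a\in A\},\gamma}$ is again (CGC-Snd/$\eta\eta$), i.e.\ $\alpha(\{g(a)\}) = g_\sharp^r(\alpha(\{a\})) = g_\sharp(\alpha(\{a\}))$ for every $a$. Hence the whole of part~(2) amounts to the pointwise identity $g_\sharp(\alpha(\{a\})) = \alpha(\{g(a)\})$. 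In the forward direction I would upgrade the soundness inequality $\alpha(\{g(a)\})\leq g_\sharp(\alpha(\{a\}))$ (instance $X=\{a\}$ of $\alpha\comp g^\diamond \sqsubseteq g_\sharp\comp\alpha$) to an equality: block-preservingness gives $g_\sharp(\alpha(\{a\})) = \alpha(\{a'\})$, applying $\gamma$ yields a containment between two blocks of the partition $\prt(\cG)$, and nested blocks of a partition are equal, so by the CGC structure of $\bT_{\text{CGC}}(\cG)$ we get $\alpha(\{a'\}) = \alpha(\{g(a)\})$; this is exactly Lemma~\ref{lemma-one}, invoked after passing to the nonempty-isomorphic PGI when $\cG$ is not an insertion. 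For the converse I would lift the pointwise identity back to $\wp(A)$ using additivity of $\alpha$ and monotonicity of $g_\sharp$: $\alpha(g^\diamond(X)) = \bigvee_{a\in X}\alpha(\{g(a)\}) = \bigvee_{a\in X} g_\sharp(\alpha(\{a\})) \leq g_\sharp(\bigvee_{a\in X}\alpha(\{a\})) = g_\sharp(\alpha(X))$.

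For part~(3) I would unfold the two round trips and compare concrete projections. Going CGC to PGC to CGC, the inner $\bT_{\text{PGC}}$ produces $\tuple{f^\diamond,f_\sharp^\diamond}$, whose block representations are the singletons $\{\eta(a)\}$ (a block-rep because soundness gives $f_\sharp(\eta(a))=\eta(f(a))\in\eta(A)$), so $\bT_{\text{CGC}}$ returns the pair $\tuple{f,(f_\sharp^\diamond)^r}$ with $(f_\sharp^\diamond)^r(\{\eta(a)\}) = \{f_\sharp(\eta(a))\}$; its concrete projection sends $a\mapsto \mu^*(\{f_\sharp(\eta(a))\}) = \mu(f_\sharp(\eta(a)))$, which is exactly $\mu\comp f_\sharp\comp\eta$, so the two pairs are $\cong$ and no extra hypothesis is needed. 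Going PGC to CGC to PGC, $\bT_{\text{CGC}}$ discards the behaviour of $g_\sharp$ off the block representations and $\bT_{\text{PGC}}$ then reconstructs an abstract function additively; this is why additivity of $g_\sharp$ is assumed. Concretely, the doubly-transformed projection on $X$ is $\bigcup_{x\in X}\gamma(\alpha(\{g(x)\}))$, while the original projection is $\gamma(g_\sharp(\alpha(X)))$, and these agree once one expands $\alpha(X)=\bigvee_{x\in X}\alpha(\{x\})$, uses additivity of $g_\sharp$ together with $g_\sharp(\alpha(\{x\}))=\alpha(\{g(x)\})$ from part~(2), and uses additivity of $\gamma$ from the PGC definition.

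The main obstacle is the forward direction of part~(2): turning the one-sided soundness inequality into the exact equality $g_\sharp(\alpha(\{a\}))=\alpha(\{g(a)\})$. This is where block-preservingness and the partition structure are indispensable, and where the gap between a PGC and a PGI (non-injective $\gamma$) must be handled, either by invoking Lemma~\ref{lemma-one} after reducing to the nonempty-isomorphic insertion or by re-deriving the equality directly from the ``nested blocks are equal'' property. Everything else is bookkeeping: tracking that block representations stay singletons under $\bT_{\text{PGC}}$, and keeping straight which notion of isomorphic pair ($\mu_1\comp f_1^\sharp\comp\eta_1=\mu_2\comp f_2^\sharp\comp\eta_2$ at the CGC level versus $\gamma_1\comp f_1^\sharp\comp\alpha_1=\gamma_2\comp f_2^\sharp\comp\alpha_2$ at the GC level) is in play for each round trip.
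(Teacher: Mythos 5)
Your proposal is correct and, at the level of architecture, matches the paper's proof: parts (1) and (2) are reduced to pointwise conditions tested at singletons, part (2) hinges on the identity $g_\sharp(\alpha(\{a\}))=\alpha(\{g(a)\})$, and part (3) is settled by computing the concrete projections of the two round trips, using additivity of $\alpha$, $\gamma$ and $g_\sharp$ exactly where you do (the paper computes $\mu^*((f_\sharp^\diamond)^r(\eta^\diamond(\{a\})))=\mu(f_\sharp(\eta(a)))$ and $\gamma^*((g_\sharp^r)^\diamond((\sing{\alpha})^\diamond(X)))=\cup_{a\in X}\gamma(g_\sharp(\alpha(\{a\})))=\gamma(g_\sharp(\alpha(X)))$, just as you sketch). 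Two differences deserve note. First, you work with the equality form (CGC-Snd/$\eta\eta$), i.e.\ $\eta\circ f=f_\sharp\circ\eta$, while the paper uses the membership form (CGC-Snd/$\mu\eta$), i.e.\ $f(a)\in\mu(f_\sharp(\eta(a)))$; these are interchangeable by the recalled equivalence of the four soundness conditions, and your choice turns part (1) into an immediate comparison of direct images instead of the paper's quantifier manipulation $\forall X.\,\forall x\in X.\,\exists y\in X.\, f(x)\in\mu(f_\sharp(\eta(y)))$. Second, and more substantially, you correctly flag that the paper's proof of (2) invokes Lemma~\ref{lemma-one}, which is stated (and proved, via $\alpha\circ\gamma=\id$) only for PGIs, whereas Theorem~\ref{theosound}(2) assumes merely a PGC; your direct derivation repairs this mismatch: from $\alpha(\{g(a)\})\leq g_\sharp(\alpha(\{a\}))=\alpha(\{a'\})$ one gets $g(a)\in\gamma(\alpha(\{g(a)\}))\subseteq\gamma(\alpha(\{a'\}))$, a containment between two (nonempty) blocks of the partition $\prt(\cG)$, which therefore coincide, and then $\alpha\circ\gamma\circ\alpha=\alpha$ (valid in any GC, with no injectivity of $\gamma$) yields $\alpha(\{g(a)\})=\alpha(\{a'\})=g_\sharp(\alpha(\{a\}))$. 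One small slip: your fallback of ``passing to the nonempty-isomorphic PGI'' borrows a notion ($\cong_\varnothing$) defined between CGCs, not GCs~---~the standard move would be reducing the GC onto $\alpha(\wp(A))$~---~but your direct argument makes that detour unnecessary anyway. In short, the paper's route buys brevity by citing the lemma; yours buys a proof of (2) that genuinely covers the PGC case as stated.
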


\begin{example}\rm 
Consider the PGI $\cS=\tuple{\alpha,\wp(\bZ),\Sign,\gamma}$ for the standard sign domain introduced in 
Example~\ref{exsign}. Consider the square operation $\sq:\bZ\ra \bZ$ such that $\sq(z)=z^2$ and its  collecting
lifting $\sq^\diamond: \wp(\bZ) \ra \wp(\bZ)$. Correspondingly, consider $\sq_\cS: \Sign \ra \Sign$ defined
as BCA of $\sq^\diamond$, namely:
\[
\sq_\cS = \{\varnothing \mapsto \varnothing,\: <\! 0 \mapsto >\! 0,\: =\! 0 \mapsto\: =\! 0,\: >\! 0 \mapsto >\! 0,\: \leq\!  0 \mapsto \geq\! 0,\: 
\neq\! 0 \mapsto >\! 0,\: 
\geq\! 0 \mapsto \geq\! 0,\: \bZ \mapsto \geq\! 0\}.
\]
Then, let us observe that $\sq_\cS$ is (monotone and) block-preserving: indeed, the set of (abstract) blocks is 
$B=\{ <\! 0,\, =\! 0,\, >\! 0\}$ and $\sq_\cS$ maps blocks to blocks. 
Hence, here we have that
$\bT_{\text{CGC}}(\cS)=\tuple{\eta,\bZ,B,\mu}$ and $\bT_{\text{CGC}}(\tuple{\sq^\diamond,\sq_\cS})=\tuple{\sq,\sq_\cS^r}$ where
$\sq_\cS^r:\bZ\ra \bZ$ is such that
$\sq_\cS^r (\alpha(\{z\})) = \alpha(\{\sq(z)\})$. 
\qed
\end{example}

\paragraph*{Completeness.}
As observed in \cite{dv16}, the above four equivalent soundness conditions for CGCs 
lead to four non-equivalent conditions of completeness for abstract functions, where $\Lra$ replaces $\Ra$:
\begin{align*}
& x\in \mu(y) \:\&\: y'=\eta(f(x)) \;\Lra\; y'=f_\sharp(y) &  \text{(CGC-Cmp/$\eta\mu$)}\\
& x\in \mu(y) \:\&\: x'=f(x) \;\Lra\; x'\in \mu(f_\sharp(y)) & \text{(CGC-Cmp/$\mu\mu$)}\\
& y=\eta(f(x)) \;\Lra\; y=f_\sharp(\eta(x)) &  \text{(CGC-Cmp/$\eta\eta$)}\\
& x'= f(x) \;\Lra\; x'\in \mu (f_\sharp(\eta(x))) & \text{(CGC-Cmp/$\mu\eta$)}
\end{align*}
It is worth remarking that these completeness conditions for a pair 
$\tuple{f,f_\sharp}$ can be equivalently stated using well known optimality/completeness conditions 
for Galois connections for the transformed pair $\bT_{\text{PGC}}(\tuple{f,f_\sharp})$. 

\begin{restatable}{lemma}{rlemsound}\label{lemsound} \ 
\begin{itemize}
\item[{\rm (1)\/}] $\tuple{f,f_\sharp}$ satisfies {\rm (CGC-Cmp/$\eta\mu$)} iff 
{\rm $\bT_{\text{PGC}}(\tuple{f,f_\sharp})$} is best correct approximation w.r.t.\ {\rm $\bT_{\text{PGC}}(\cC)$}.
\item[{\rm (2)\/}] $\tuple{f,f_\sharp}$ satisfies {\rm (CGC-Cmp/$\mu\mu$)} iff 
{\rm $\bT_{\text{PGC}}(\tuple{f,f_\sharp})$} is forward complete w.r.t.\ {\rm $\bT_{\text{PGC}}(\cC)$}. 
\item[{\rm (3)\/}] $\tuple{f,f_\sharp}$ satisfies {\rm (CGC-Cmp/$\eta\eta$)} iff 
{\rm $\bT_{\text{PGC}}(\tuple{f,f_\sharp})$} is backward complete w.r.t.\ {\rm $\bT_{\text{PGC}}(\cC)$}.
\item[{\rm (4)\/}] $\tuple{f,f_\sharp}$ satisfies {\rm (CGC-Cmp/$\mu\eta$)} iff 
{\rm $\bT_{\text{PGC}}(\tuple{f,f_\sharp})$} is precise  w.r.t.\ {\rm $\bT_{\text{PGC}}(\cC)$}.
\end{itemize}
\end{restatable}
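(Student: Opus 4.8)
The plan is to recognize that each of the four Galois-connection-theoretic conditions on the transformed pair $\bT_{\text{PGC}}(\tuple{f,f_\sharp}) = \tuple{f^\diamond,f_\sharp^\diamond}$ is an equality between two maps built from the powerset liftings $\eta^\diamond$, $f^\diamond$, $f_\sharp^\diamond$ and $\mu^*$, and to reduce each such equality to a pointwise identity on singletons. The enabling observation is that every one of these liftings is additive---it preserves arbitrary unions and sends $\varnothing$ to $\varnothing$---so any composite of them is additive as well, and an additive map on a powerset is completely determined by its values on singletons. Spelling out the abstract definitions for the PGC $\bT_{\text{PGC}}(\cC)$, whose abstraction and concretization are $\eta^\diamond$ and $\mu^*$, the four conditions become the functional equations $f_\sharp^\diamond = \eta^\diamond \circ f^\diamond \circ \mu^*$ (best correct approximation), $f^\diamond \circ \mu^* = \mu^* \circ f_\sharp^\diamond$ (forward completeness), $\eta^\diamond \circ f^\diamond = f_\sharp^\diamond \circ \eta^\diamond$ (backward completeness), and $f^\diamond = \mu^* \circ f_\sharp^\diamond \circ \eta^\diamond$ (precision).

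First I would instantiate each of these equations at a singleton and simplify, using the lifting definitions together with CGC-Corr, which gives $\mu^*(\{y\}) = \mu(y) = \{x \mid \eta(x)=y\}$, $\eta^\diamond(\{x\}) = \{\eta(x)\}$ and $f^\diamond(\{x\}) = \{f(x)\}$. This turns the four functional equations into the pointwise identities: (1) $\{f_\sharp(y)\} = \{\eta(f(x)) \mid x \in \mu(y)\}$ for all $y\in B$; (2) $\{f(x) \mid x \in \mu(y)\} = \mu(f_\sharp(y))$ for all $y\in B$; (3) $\eta(f(x)) = f_\sharp(\eta(x))$ for all $x\in A$; and (4) $\{f(x)\} = \mu(f_\sharp(\eta(x)))$ for all $x\in A$.

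Next I would match each pointwise identity with the corresponding completeness biconditional. The mechanism is uniform: reading a condition such as (CGC-Cmp/$\mu\eta$) as a biconditional, its $\Ra$ direction is exactly the soundness inclusion ($\{f(x)\} \subseteq \mu(f_\sharp(\eta(x)))$, available from Theorem~\ref{theosound}) while its $\La$ direction supplies the reverse inclusion; together they give precisely the set equality (4), and symmetrically for (1)--(3). Thus each completeness biconditional is equivalent to the singleton form of the associated GC condition, which by additivity is equivalent to the full functional equation---establishing all four equivalences.

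The step I expect to require the most care is controlling the direction of the biconditionals together with the empty-meaning abstract values. When $\mu(y)=\varnothing$, equivalently $y\notin\eta(A)$ by Lemma~\ref{CGCprop}~(2), the singleton identity (1) degenerates to $\{f_\sharp(y)\}=\varnothing$, which cannot hold, while the $\La$ direction of (CGC-Cmp/$\eta\mu$) also fails for such $y$; so both sides of the equivalence fail in lockstep and the ``iff'' is preserved. Item~(1) is the most delicate case overall, since there $\mu^*$ occurs on the concrete-input side of the composite, so I must verify both that the reduction to singletons is still legitimate (it is, because $\eta^\diamond \circ f^\diamond \circ \mu^*$ remains additive) and that the forward/backward inclusions are paired with the soundness/completeness halves of the biconditional rather than being inadvertently swapped.
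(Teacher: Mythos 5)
Your proposal is correct and takes essentially the same route as the paper's proof: both unfold the four Galois-connection conditions for $\bT_{\text{PGC}}(\tuple{f,f_\sharp})$ into the functional equations $f_\sharp^\diamond = \eta^\diamond \circ f^\diamond \circ \mu^*$, $f^\diamond \circ \mu^* = \mu^* \circ f_\sharp^\diamond$, $\eta^\diamond \circ f^\diamond = f_\sharp^\diamond \circ \eta^\diamond$, $f^\diamond = \mu^* \circ f_\sharp^\diamond \circ \eta^\diamond$, and then match them elementwise with the four CGC completeness conditions. Your explicit reduction to singletons via additivity of the liftings, together with the check of the degenerate case $\mu(y)=\varnothing$ in item~(1), merely makes precise the instantiation at arbitrary $X\in\wp(A)$ and $Y\in\wp(B)$ that the paper carries out directly in its chain of ``iff''s.
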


\section{Purely Partitioning Galois Connections}
Drawing on the above results, 
we define a novel  
class of constructive abstract domains, which we call purely constructive Galois connections (PCGCs). 
The idea is that PCGCs generalize CGCs as follows. 
CGCs essentially represent a partition of the carrier concrete domain
$A$ as an abstract domain $B$. We showed that this partition representation also brings all the possible unions of its blocks. 
The goal is to generalize this approach by allowing to decide which unions of blocks to consider in the abstract domain $B$. 
Hence, $B$ may be defined as a partition $P$ of $A$ together with an explicit selection of unions of blocks of $P$,
where this selection may range from none to all. 

A \emph{purely constructive Galois connection} (PCGC) $\tuple{\eta,A,B,\mu}_{\text{PCGC}}$ 
consists of an unordered concrete carrier 
set $A$ and of an ordered abstract domain which is required to be a poset $\tuple{B,\leq}$, 
together with the
mappings
$\eta: A\ra B$ and  $\mu: B\ra \wp(A)$ which satify the following two conditions:
\begin{align*}
(1)~~ & x\in \mu(\eta(x')) \:\Lra\: \eta(x)=\eta(x')\\
(2)~~ & x\in \mu(y) \:\Lra\: \eta(x) \leq y
\end{align*}
Thus, condition (2) coincides with (CGP-Corr), while condition~(1) amounts to (CGC-Corr) restricted to abstract values
ranging in $\eta(A)$. PCGCs have the following properties. 

\begin{restatable}[PCGC properties]{lemma}{rPCGCprop}\label{PCGC-prop}
Consider a PCGC $\tuple{\eta,A,B_\leq ,\mu}$. 
\begin{itemize}
\item[{\rm (1)\/}] $\eta(a_1)=\eta(a_2)\:\Lra\: \mu(\eta(a_1))= \mu(\eta(a_2)) \:\Lra\: \mu(\eta(a_1))\cap \mu(\eta(a_2)) \neq \varnothing$
\item[{\rm (2)\/}] $\mu(b)=\varnothing \:\Ra\: b\not\in \eta(A)$, while the viceversa does not hold 
\item[{\rm (3)\/}] If $B$ is a complete lattice then $\tuple{\eta^\vee, \wp(A)_\subseteq,B_\leq,\mu}$ is a GC
\end{itemize}
\end{restatable}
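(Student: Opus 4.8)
The plan is to treat the three parts separately, exploiting that condition~(1) is exactly CGC-Corr restricted to abstract values in $\eta(A)$, while condition~(2) is exactly CGP-Corr, so that parts~(1) and~(3) reuse, almost verbatim, the arguments behind Lemmas~\ref{CGCprop} and~\ref{CGP-prop}. Throughout I would first record the single reflexivity fact that every block is nonempty: instantiating condition~(1) with $x=x'=a$ yields $a\in\mu(\eta(a))\Lra\eta(a)=\eta(a)$, hence $a\in\mu(\eta(a))$ for every $a\in A$.

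For part~(1) I would prove a cyclic chain of implications. The step $\eta(a_1)=\eta(a_2)\Ra\mu(\eta(a_1))=\mu(\eta(a_2))$ is immediate by applying $\mu$ to both sides. For $\mu(\eta(a_1))=\mu(\eta(a_2))\Ra\mu(\eta(a_1))\cap\mu(\eta(a_2))\neq\varnothing$ I would invoke the reflexivity fact: $a_1\in\mu(\eta(a_1))$ makes the common set nonempty. Finally, to close the cycle with $\mu(\eta(a_1))\cap\mu(\eta(a_2))\neq\varnothing\Ra\eta(a_1)=\eta(a_2)$, I would pick some $x$ in the intersection and apply condition~(1) twice (with $x'=a_1$ and with $x'=a_2$), getting $\eta(a_1)=\eta(x)=\eta(a_2)$. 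This is the same pattern as Lemma~\ref{CGCprop}~(1), now licensed by condition~(1) precisely because every abstract value occurring here has the form $\eta(a_i)$.

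For part~(2), the forward direction is the contrapositive of the reflexivity fact: if $b=\eta(a)\in\eta(A)$ then $a\in\mu(b)$, so $\mu(b)\neq\varnothing$; hence $\mu(b)=\varnothing$ forces $b\notin\eta(A)$. For the failure of the converse I would exhibit a PCGC containing a genuine union-of-blocks value, e.g.\ the sign abstraction $\tuple{\eta,\bZ,\Sign,\mu}$ with $\eta$ returning the strict sign and $\mu$ the usual concretization: here $\leq\!0\notin\eta(\bZ)=\{<\!0,\,=\!0,\,>\!0\}$ yet $\mu(\leq\!0)=\bZ_{\leq 0}\neq\varnothing$, so $b\notin\eta(A)$ does not imply $\mu(b)=\varnothing$. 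This is exactly the gap distinguishing PCGCs from CGCs, where the stronger equivalence of Lemma~\ref{CGCprop}~(2) does hold.

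For part~(3), assuming $B$ is a complete lattice, I would verify the adjunction $\eta^\vee(X)\leq b\Lra X\subseteq\mu(b)$ directly. Unfolding $\eta^\vee(X)=\bigvee_{a\in X}\eta(a)$, the least-upper-bound property gives $\eta^\vee(X)\leq b\Lra(\forall a\in X.\,\eta(a)\leq b)$, and condition~(2) rewrites each conjunct $\eta(a)\leq b$ as $a\in\mu(b)$, i.e.\ $\forall a\in X.\,a\in\mu(b)$, which is precisely $X\subseteq\mu(b)$. Since a GC only requires this adjunction relation between the posets $\wp(A)_\subseteq$ and $B_\leq$, this finishes the proof, and monotonicity of $\eta^\vee$ and $\mu$ then follows for free. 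I expect the only real care to lie in part~(2): the arguments in~(1) and~(3) are mechanical once the reflexivity fact and the lub characterization are in hand, whereas for~(2) one must both recognize that condition~(1) alone yields the forward implication and select a concrete witness showing the converse genuinely fails.
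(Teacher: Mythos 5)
Your proposal is correct and takes essentially the same route as the paper's proof: the reflexivity fact $a\in\mu(\eta(a))$ together with PCGC~(1) for part~(1), the contrapositive of reflexivity for the forward implication in part~(2), and the lub characterization $\eta^\vee(X)\leq b \Lra \forall a\in X.\,\eta(a)\leq b \Lra X\subseteq\mu(b)$ via PCGC~(2) for part~(3). The only (immaterial) difference is your counterexample for the converse in part~(2): you use the full $\Sign$ domain, which is indeed a valid PCGC with $\leq\!0\notin\eta(\bZ)$ yet $\mu(\leq\!0)=\bZ_{\leq 0}\neq\varnothing$, whereas the paper points to the interval lattice of Example~\ref{ex-PCGC}.
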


\noindent
In particular, let us remark that: by Lemma~\ref{PCGC-prop}~(1), $\{\mu(\eta(a))\}_{a\in A}$ still is a partition of $A$; 
by Lemma~\ref{PCGC-prop}~(2), differently from CGCs, if $b\not\in\eta(A)$ it may happen that $\mu(b)\neq \varnothing$; 
by Lemma~\ref{PCGC-prop}~(3), analogously to CGPs, $\eta^\vee$ and $\mu$ give rise to a GC. 

\begin{example}\label{ex-PCGC}\rm 
Consider the following finite lattice $B$ of integer intervals ordered by subset inclusion:

{\centering{
    \begin{tikzpicture}[scale=0.5]
      \draw (0,8) node[name=j] {{\scriptsize$\bZ$}};
      \draw (2,6) node[name=i] {{\scriptsize$[-9,+\infty)$}}; 
      \draw (-2,6) node[name=h] {{\scriptsize$(-\infty,9]$}}; 
      \draw (0,4) node[name=g] {{\scriptsize$[-9,9]$}};
      \draw (6,2) node[name=f] {{\scriptsize$[10,+\infty)$}};
      \draw (2,2) node[name=e] {{\scriptsize$[1,9]$}};   
      \draw (0,2) node[name=d] {{\scriptsize$[0,0]$}};
      \draw (-2,2) node[name=c] {{\scriptsize$[-9,-1]$}};      
      \draw (-6,2) node[name=b] {{\scriptsize $(-\infty,-10]$}};
      \draw (0,0) node[name=a] {{\scriptsize$\varnothing$}};

      \draw[semithick] (a) -- (b);
      \draw[semithick] (a) -- (c);
      \draw[semithick] (a) -- (d);
      \draw[semithick] (a) -- (e);
      \draw[semithick] (a) -- (f);
      \draw[semithick] (g) -- (c);
      \draw[semithick] (g) -- (d);
      \draw[semithick] (g) -- (e);
      \draw[semithick] (g) -- (h);
      \draw[semithick] (g) -- (i);
      \draw[semithick] (j) -- (h);
      \draw[semithick] (j) -- (i);
      \draw[semithick] (b) -- (h);
      \draw[semithick] (f) -- (i);

\end{tikzpicture}
\par
}
}

\noindent
Let $\eta:\mathbb{Z}\ra B$ be defined as follows:
\[
\eta(x) \ud
\begin{cases}
(-\infty,-10] & \text{if~}x\in (-\infty,-10]\\
[-9,-1] & \text{if~} x \in [-9,-1]\\
[0,0] & \text{if~} x =0\\
[1,9] & \text{if~} x \in [1,9]\\
[10,+\infty) & \text{if~}x \in [10,+\infty)
\end{cases}
\] 
while $\mu:B\ra \wp(\mathbb{Z})$ is simply defined as the identity map. 
Then, it is simple to check that $\cP=\tuple{\eta,\mathbb{Z},B,\mu}$ is a PCGC. 
However, it turns out that 
$\cP$ is not a CGC: in fact, $0\in \mu([-9,9])$ while $\eta(0)=[0,0] \neq [-9,9]$. Also, if $\bZ$ is 
considered as a poset w.r.t.\ the discrete order then $\wpd(\bZ) = \wp(\bZ)$ and
$\eta$ and $\mu$ are monotone, so that, by PCGC~(2), $\cP$ is a CGP. 
Consider now $B' \ud (B\smallsetminus \{[-9,9], (-\infty,9], [-9,+\infty)\}) \cup \{[-10,10]\}$, which  
still is a finite lattice 
where 
$[-10,10]$ is the lub in $B'$ of $\{[-9,-1], [0,0], [1,9]\}$,
although $[-10,10]$ is not the union of these blocks. 
In this case, $\tuple{\eta,\mathbb{Z},B',\mu}$
is not a PCGC, because $10\in \mu([-10,10])$ but $\eta(10)=[10,+\infty) \not\subseteq [-10,10]$, so that 
PCGC~(2) does not hold. Finally, consider the CGC $\cC=\tuple{\eta,\bZ,B,\mu}$ defined in Example~\ref{exCGC}. 
Then, $\cC$ is not a PCGC because $1\in \mu(\eta(0)) =\bZ$ but $+=\eta(1) \neq \eta(0)=\top$.  
\qed
\end{example}

Similarly to Theorems~\ref{teoPGC} and~\ref{teoCGP}, let us now characteriza PCGCs as a class of Galois connections.  
Recall that a GC $\cG=\tuple{\alpha,\wp(A)_\subseteq,D_\leq,\gamma}$ is a PGC when $\prt(\cG)$ is a partition of $A$
and $\gamma$ is additive. By dropping this latter requirement of additivity for $\gamma$, we
define $\cG$ to be a
\emph{purely partitioning} Galois connection (PPGC) just when $\prt(\cG)$ is a partition of $A$. 
The terminology ``purely partitioning'' hints at the property (which is not hard to check) 
that the disjunctive completion of $D$ indeed yields a partitioning
Galois connection. It turns out that this class of GCs characterize PCGCs as follows.

\begin{restatable}[PCGC-PPGC Equivalence]{theorem}{rteoPCGC}\label{teoPCGC}\ 
\begin{itemize}
\item[{\rm (1)\/}] If $B_\leq$ is a complete lattice and 
$\cC = \tuple{\eta,A,B_\leq,\mu}$ is a PCGC then {\rm $\bT_{\text{PPGC}}(\cC)$} $\ud 
\tuple{\eta^\vee,\wp(A)_\subseteq, B_\leq,\mu}$ is a PPGC.
\item[{\rm (2)\/}] If $\cG=\tuple{\alpha,\wp(A)_\subseteq,D_\leq,\gamma}$ is a PPGC then 
{\rm $\bT_{\text{PCGC}}(\cG)$} $\ud \tuple{\alphasing,A,D_\leq,\gamma}$
is a PCGC. 
\item[{\rm (3)\/}] The transforms {\rm $\bT_{\text{PPGC}}$} and {\rm $\bT_{\text{PCGC}}$} 
are one the inverse of the other, up to 
isomorphism.
\end{itemize} 
\end{restatable}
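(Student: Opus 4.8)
The plan is to verify the three parts in order, leaning on the already-established Lemma~\ref{PCGC-prop} to dispatch the routine content and isolating the one genuinely substantive step. For part~(1), recall that a PPGC is by definition a GC whose induced family $\prt(\cdot)$ is a partition, so two things must be checked for $\bT_{\text{PPGC}}(\cC) = \tuple{\eta^\vee,\wp(A)_\subseteq,B_\leq,\mu}$. The adjunction is handed to us directly by Lemma~\ref{PCGC-prop}~(3), which asserts that $\tuple{\eta^\vee,\wp(A)_\subseteq,B_\leq,\mu}$ is a GC whenever $B$ is a complete lattice. For the partition condition I would compute $\prt(\bT_{\text{PPGC}}(\cC)) = \{\mu(\eta^\vee(\{a\}))\}_{a\in A} = \{\mu(\eta(a))\}_{a\in A}$, using $\eta^\vee(\{a\})=\eta(a)$, and then invoke Lemma~\ref{PCGC-prop}~(1) (and the remark following it) to conclude that $\{\mu(\eta(a))\}_{a\in A}$ is a partition of $A$. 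So part~(1) is essentially immediate.

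For part~(2), I would check the two defining conditions of a PCGC for $\bT_{\text{PCGC}}(\cG) = \tuple{\alphasing,A,D_\leq,\gamma}$, where $\alphasing(a)=\alpha(\{a\})$ and the concretization is $\gamma$. Condition~(2), namely $x\in\gamma(y) \Lra \alpha(\{x\})\leq y$, is a restatement of the adjunction $\{x\}\subseteq\gamma(y)\Lra\alpha(\{x\})\leq y$. For condition~(1), the $\La$ direction follows from extensivity: $\alpha(\{x\})=\alpha(\{x'\})$ gives $\gamma(\alpha(\{x\}))=\gamma(\alpha(\{x'\}))$, and $x\in\gamma(\alpha(\{x\}))$ always holds. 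The $\Ra$ direction is the crux of the theorem and the step I expect to be the main obstacle: assuming $x\in\gamma(\alpha(\{x'\}))$, I would observe that $x$ lies both in the block $\gamma(\alpha(\{x'\}))$ and in its own block $\gamma(\alpha(\{x\}))$; since $\prt(\cG)$ is a partition (this is exactly where the PPGC hypothesis is used), two blocks sharing the element $x$ must coincide, so $\gamma(\alpha(\{x\}))=\gamma(\alpha(\{x'\}))$. Applying $\alpha$ and using the GC identity $\alpha\circ\gamma\circ\alpha=\alpha$ then yields $\alpha(\{x\})=\alpha(\{x'\})$, as required.

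For part~(3), I would simply compute the two round trips. Starting from a PCGC $\cC=\tuple{\eta,A,B,\mu}$ with $B$ a complete lattice, applying $\bT_{\text{PPGC}}$ and then $\bT_{\text{PCGC}}$ replaces $\eta^\vee$ by its singleton lowering, which sends $a\mapsto\eta^\vee(\{a\})=\eta(a)$; hence $\bT_{\text{PCGC}}(\bT_{\text{PPGC}}(\cC))=\cC$ on the nose. Conversely, starting from a PPGC $\cG=\tuple{\alpha,\wp(A)_\subseteq,D_\leq,\gamma}$, applying $\bT_{\text{PCGC}}$ and then $\bT_{\text{PPGC}}$ replaces $\alphasing$ by $(\alphasing)^\vee$, and here I would use that the lower adjoint of any GC is additive to get $(\alphasing)^\vee(X)=\bigvee_{a\in X}\alpha(\{a\})=\alpha(\bigcup_{a\in X}\{a\})=\alpha(X)$, so $\bT_{\text{PPGC}}(\bT_{\text{PCGC}}(\cG))=\cG$. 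Both composites are therefore literally identities (hence in particular isomorphisms); the completeness hypothesis on the abstract domain is exactly what is needed to legally apply the forward transform $\bT_{\text{PPGC}}$, and any residual discrepancy arising from redundant abstract values in $D$ is absorbed into the isomorphism relation, which accounts for the ``up to isomorphism'' phrasing.
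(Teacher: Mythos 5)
Your proof is correct and follows essentially the same route as the paper's: Lemma~\ref{PCGC-prop}~(3) plus PCGC~(1) for part~(1), the adjunction together with the partition property of $\prt(\cG)$ for part~(2) (you use that two blocks sharing an element coincide where the paper uses block inclusion, a cosmetic difference), and the two round-trip computations for part~(3). The only deviation is that in part~(3) you establish the literal identity $(\alphasing)^\vee=\alpha$ via additivity of the lower adjoint~---~which is exactly the identity the paper itself verifies later in the proof of Theorem~\ref{teoPCGCsound}~(2)~---~whereas the paper's proof of part~(3) only computes equality of the images $\gamma(\alpha(\wp(A)))$ and concludes isomorphism, so your conclusion is in fact slightly sharper.
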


\begin{example}\rm
Consider the PCGC $\cP$ defined in Example~\ref{ex-PCGC}, so that  $\bT_{\text{PPGC}}(\cP) = \tuple{\eta^\vee, \wp(\bZ)_\subseteq,B_\leq,\id}$
is a PPGC where the corresponding partition of $\bZ$ is $P=\{(-\infty,10], [-9,-1],$ $[0,0],[1,9],[10,+\infty)\}$ and
the abstraction map $\eta^\vee$ approximates a set of integers $X\in \wp(\bZ)$ by the least  union of blocks of $P$ which belongs to
$B$: for example, $\eta^\vee(\{1,10\})=[-9,+\infty)$ and $\eta^\vee(\{0,1\})=[-9,9]$. \qed
\end{example}

\paragraph*{CGCs as PCGCs as CGPs.}
It turns out that any CGC is indeed a PCGC, which, in turn, is a CGP. 
Let $\tuple{\eta,A,B,\mu}$ be a CGC. 
Firstly, it is enough to consider $B$ as a poset for the discrete partial order $\leq$, since 
this makes $\tuple{\eta,A,B_\leq,\mu}$ a PCGC. In fact: (1) $a\in \mu(\eta(a'))$ iff, by (CGC-Corr), 
$\eta(a)=\eta(a')$; (2) if $b\in \eta(A)$ then $b=\eta(a')$, for some $a'$, so that,  by (CGC-Corr),
$a\in \mu(b)\Lra \eta(a)=b$, while if $b\not\in \eta(A)$, then, by Lemma~\ref{CGCprop}~(2),  $\mu(b)=\varnothing$. 
Furthermore, any PCGC $\tuple{\eta,A,B_\leq,\mu}$ can be viewed as a CGP simply by making the concrete carrier set 
$A$ a poset for the discrete order, so that $\wpd(A)=\wp(A)$, and $\eta$ becomes trivially monotone as well as
$\mu:B\ra \wp(A)$: in fact, if $b_1\leq b_2$ and $a\in \mu(b_1)$ then $\eta(a) \leq b_1 \leq b_2$, so that $a\in \mu(b_2)$. 

\subsection{Soundness of Abstract Operations}
Let $\cC = \tuple{\eta,A,B_\leq,\mu}$ be a PCGC and $f: A\ra A$ be a concrete function.  
By relying on Theorem~\ref{teoPCGC}~(1), we are able to define the BCA of the lifted function
$f^\diamond:\wp(A)\ra\wp(A)$
w.r.t.\ the PPGC $\tuple{\eta^\vee,\wp(A)_\subseteq, B_\leq,\mu} = \bT_{\text{PPGC}}(\cC)$. This is denoted by
$f_\cC: B\ra B$ and is therefore defined by $f_\cC \ud \eta^\vee \circ f^\diamond \circ \mu$, so that: 
$$f_\cC (b) = \vee\{\eta(f(a))~|~ a\in \mu(b)\}.$$ 
Hence, given an abstract function $f_\sharp:B\ra B$, this BCA leads us to define 
$\tuple{f,f^\sharp}$ to be sound for $\cC$ when $f_\sharp$ is less precise than the BCA, that is, when for any $b\in B$,
$f_\cC (b) \leq f_\sharp (b)$.
This turns out to be equivalent to the following condition: $\tuple{f,f^\sharp}$ is sound w.r.t.\ $\cC$ iff 
$$\eta(a) \leq b \:\Ra\: \eta(f(a)) \leq f_\sharp (b)\eqno(\text{PCGC-Snd})$$  

It is then easy to transform a sound pair of concrete/abstract functions
$\tuple{f,f_\sharp}$ for a PCGC $\cC$ into the pair $\bT_{\text{PPGC}}(\tuple{f,f_\sharp})\ud \tuple{f^\diamond,f_\sharp}$
for the corresponding PPGC $\bT_{\text{PPGC}}(\cC)=\tuple{\eta^\vee,\wp(A)_\subseteq, B_\leq,\mu}$. 
Conversely, if $\cD = \tuple{\alpha,\wp(A)_\subseteq,D_\leq,\gamma}$ is a PPGC and 
$\tuple{g^\diamond,g_\sharp}$ is a sound pair for $\cD$, where  
$g^\diamond:\wp(A)\ra \wp(A)$ for some $g:A\ra A$, then 
$\tuple{g^\diamond,g_\sharp}$ is transformed into $\bT_{\text{PCGC}}(\tuple{g^\diamond,g_\sharp})\ud \tuple{g,g_\sharp}$
relatively to the PCGC $\bT_{\text{PCGC}}(\cD)$. A result analogous to Theorem~\ref{theosound} can then be proved. 

\begin{restatable}{theorem}{rteoPCGCsound}\label{teoPCGCsound}\
\begin{itemize}
\item[{\rm (1)\/}] Let $\cC = \tuple{\eta,A,B_\leq,\mu}$ be a PCGC, with $B$ complete lattice, 
$f:A\ra A$ and $f_\sharp : B\ra B$. 
Then, 
$\tuple{f,f_\sharp}$ is sound iff {\rm $\bT_{\text{PPGC}}(\tuple{f,f_\sharp})$} is sound
w.r.t.\  {\rm $\bT_{\text{PPGC}}(\cC)$}. 
\item[{\rm (2)\/}] Let $\cD = \tuple{\alpha,\wp(A)_\subseteq,D_\leq,\gamma}$ be a PPGC, 
$g^\diamond:\wp(A)\ra \wp(A)$,  for some $g:A\ra A$,  and 
$g_\sharp:D\ra D$. Then,
$\tuple{g^\diamond,g_\sharp}$ is sound iff {\rm $\bT_{\text{PCGC}}(\tuple{g^\diamond,g_\sharp})$} is sound
w.r.t.\ {\rm $\bT_{\text{PCGC}}(\cD)$}. 
\item[{\rm (3)\/}] The transforms {\rm $\bT_{\text{PPGC}}$} and {\rm $\bT_{\text{PCGC}}$} are one the inverse of the other. 
\end{itemize}
\end{restatable}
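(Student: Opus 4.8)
The plan is to prove the three items by unfolding the relevant soundness conditions on each side and matching them; the whole argument rests on the single algebraic observation that the abstraction map of a Galois connection is additive, together with the fact that, unlike the CGC/PGC transforms behind Theorem~\ref{theosound}, the present transforms leave the abstract function untouched because the abstract domain is preserved. For item~(1) I would simply unfold both notions of soundness. On the PCGC side, $\tuple{f,f_\sharp}$ is sound by definition when $f_\cC\sqsubseteq f_\sharp$, where $f_\cC=\eta^\vee\circ f^\diamond\circ\mu$ is the best correct approximation of $f^\diamond$ computed in the complete lattice $B$. On the PPGC side, since $\bT_{\text{PPGC}}(\cC)=\tuple{\eta^\vee,\wp(A)_\subseteq,B_\leq,\mu}$ is a genuine Galois connection by Theorem~\ref{teoPCGC}~(1), soundness of $\bT_{\text{PPGC}}(\tuple{f,f_\sharp})=\tuple{f^\diamond,f_\sharp}$ means exactly $\eta^\vee\circ f^\diamond\circ\mu\sqsubseteq f_\sharp$. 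These two inequalities are literally identical, so item~(1) holds by construction.

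For item~(2) I would transport the PPGC soundness condition $\alpha\circ g^\diamond\circ\gamma\sqsubseteq g_\sharp$ through the transform $\bT_{\text{PCGC}}(\cD)=\tuple{\sing{\alpha},A,D_\leq,\gamma}$. The abstract function is left unchanged, so the transformed pair is $\tuple{g,g_\sharp}$, whose PCGC soundness reads $g_{\bT_{\text{PCGC}}(\cD)}\sqsubseteq g_\sharp$ with $g_{\bT_{\text{PCGC}}(\cD)}(d)=\bigvee\{\alpha(\{g(a)\})\mid a\in\gamma(d)\}$. The key step is the identity $\alpha(g^\diamond(\gamma(d)))=\bigvee\{\alpha(\{g(a)\})\mid a\in\gamma(d)\}$: writing $g^\diamond(\gamma(d))=\bigcup_{a\in\gamma(d)}\{g(a)\}$ and using that $\alpha$, being the left adjoint of the Galois connection $\cD$, preserves arbitrary unions, the right-hand side collapses to $\alpha(g^\diamond(\gamma(d)))$; equivalently, $(\sing{\alpha})^\vee=\alpha$. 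Hence the two soundness conditions coincide pointwise in $d$, which gives item~(2).

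For item~(3) the statement is cleaner than its CGC/PGC analogue precisely because both transforms fix the abstract function: $\bT_{\text{PPGC}}$ sends $\tuple{f,f_\sharp}$ to $\tuple{f^\diamond,f_\sharp}$ and $\bT_{\text{PCGC}}$ sends $\tuple{g^\diamond,g_\sharp}$ to $\tuple{g,g_\sharp}$. On the abstract component each transform is the identity, and on the concrete component the passage $g\mapsto g^\diamond$ is inverted by $g^\diamond\mapsto g$, since $g^\diamond(\{a\})=\{g(a)\}$ recovers $g$ uniquely. Combining this with the fact that the underlying domain transforms $\bT_{\text{PPGC}}$ and $\bT_{\text{PCGC}}$ are mutually inverse by Theorem~\ref{teoPCGC}~(3) yields item~(3) with no block-preserving or additivity side conditions. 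The only real content of the whole theorem is the additivity identity of item~(2), and it is a direct consequence of the adjunction defining $\cD$; the conceptual point worth stressing is that, because PCGCs and PPGCs share the same abstract domain, the function transforms neither lift nor restrict the abstract operation, so the round trips are the identity on the nose rather than merely up to renaming.
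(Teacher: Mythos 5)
Your proposal is correct and follows essentially the same route as the paper's proof: item~(1) is definitional once both soundness conditions are unfolded to $\eta^\vee\circ f^\diamond\circ\mu\sqsubseteq f_\sharp$, item~(2) reduces to the identity $(\sing{\alpha})^\vee=\alpha$ obtained from additivity of the left adjoint, and item~(3) is immediate because both transforms fix the abstract function. Where the paper dismisses item~(3) as ``Clear,'' your explicit remark that $g\mapsto g^\diamond$ is inverted by evaluating on singletons is a harmless, slightly more detailed rendering of the same observation.
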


Since $f_\sharp$ is defined to be sound when $\eta^\vee \circ f^\diamond \circ \mu \leq f_\sharp$, it is then natural to define 
$f_\sharp$ optimal when $\eta^\vee \circ f^\diamond \circ \mu = f_\sharp$, backward complete when 
$\eta^\vee \circ f^\diamond = f_\sharp \circ \eta^\vee$ and forward complete when 
$f^\diamond \circ \mu = \mu \circ f_\sharp$.
In particular, these definitions allow us
to apply the abstraction refinement operators introduced in \cite{grs00} for minimally refining the abstract domain 
$B$ in order to obtain a backward/forward complete abstract function and the technique introduced in \cite{gr14} for
simplifying abstract domains while retaining the optimality of abstract operations.

\subsection{An Example of PCGC}
Consider the following infinite complete lattice $B_\leq$.

{\centering{
    \begin{tikzpicture}[scale=0.5]
      \draw (0,8) node[name=l] {{\scriptsize$\bZ$}};
      
      \draw (2,6) node[name=k] {{\scriptsize $\geq 0$}};
      \draw (0,6) node[name=j] {{\scriptsize $\neq 0$}};
      \draw (-2,6) node[name=i] {{\scriptsize $\leq 0$}};
      
      \draw (2,4) node[name=h] {{\scriptsize $>0$}};
      \draw (-2,4) node[name=g] {{\scriptsize $<0$}};
      
      \draw (4,2) node[name=f] {{\scriptsize $2$}};
      \draw (2,2) node[name=e] {{\scriptsize $1$}};
      \draw (0,2) node[name=d] {{\scriptsize $0$}};
      \draw (-2,2) node[name=c] {{\scriptsize $-1$}};
      \draw (-4,2) node[name=b] {{\scriptsize $-2$}};
      \draw (-5,2) node[name=c1] {{$\cdots$}};
      \draw (5,2) node[name=c2] {{$\cdots$}};
      \draw (-3.5,1) node[name=c3] {{$\cdots$}};
      \draw (3.5,1) node[name=c4] {{$\cdots$}};
      \draw (-4,3) node[name=c5] {{$\cdots$}};
      \draw (4,3) node[name=c6] {{$\cdots$}};

      \draw (0,0) node[name=a] {{\scriptsize$\varnothing$}};

      \draw[semithick] (a) -- (b);
      \draw[semithick] (a) -- (c);
      \draw[semithick] (a) -- (d);
      \draw[semithick] (a) -- (e);
      \draw[semithick] (a) -- (f);
    
      \draw[semithick] (g) -- (b);
      \draw[semithick] (g) -- (c);

      \draw[semithick] (h) -- (e);
      \draw[semithick] (h) -- (f);

      \draw[semithick] (i) -- (g);
      \draw[semithick] (i) -- (d);

      \draw[semithick] (j) -- (g);
      \draw[semithick] (j) -- (h);

      \draw[semithick] (k) -- (h);
      \draw[semithick] (k) -- (d);

      \draw[semithick] (l) -- (i);
      \draw[semithick] (l) -- (j);
      \draw[semithick] (l) -- (k);
      
\end{tikzpicture}
\par
}
}

\noindent
$B$ is intended to be an abstract domain which includes both constant and sign information of an integer variable.
Indeed $B$ can be defined as reduced product of the standard constant propagation domain \cite{nnh} and of the sign abstraction
in Example~\ref{exsign}. 
For example, for a while program such as: 
$$x:=2; y:= 2; ~\textbf{while}~ x<9 ~\textbf{do}~ x:=x*y;$$
a standard analysis with this abstract domain $B$ allows us to derive the loop invariant $\{x > 0,\: y=2\}$.

It turns out that the abstraction $B$ can be constructively defined. This definition of $B$ relies on $\eta: \bZ \ra B$ and $\mu:B\ra \wp(\bZ)$ which are essentially defined 
as identity functions. It should be clear that $B$ is a purely partitioning domain, while it is not a fully 
partitioning domain, and therefore $B$ cannot be equivalently defined within the constructive Galois connection approach.
In fact, $\cC=\tuple{\eta,\bZ,B,\mu}$ is not a CGC, because
$1\in \mu(> \!0)$ while $1=\eta(1)\neq\; >\!0$.
Instead, $\cC$ turns out to be a PCGC. 

Next, consider the concrete binary integer multiplication $\otimes:\bZ\times \bZ \ra \bZ$. By following
Theorem~\ref{teoPCGCsound}~(1), we define a corresponding abstract
multiplication $\otimes_\sharp:B\times B\ra B$ as follows: 
$$\otimes_\sharp(b_1,b_2) \ud \eta^\vee(\otimes^\diamond (\mu(b_1),\mu(b_2))$$
Namely, $\otimes_\sharp$ is the best correct approximation of the powerset lifting 
$\otimes^\diamond:\wp(\bZ)\times \wp(\bZ) \ra \wp(\bZ)$ w.r.t.\ the PPGC
$\tuple{\eta^\vee,\wp(\bZ)_\subseteq, B_\leq ,\mu}=\bT_{\text{PPGC}}(\cC)$. For instance,
$\otimes_\sharp(2,<\! 0) = \:<\! 0$ and $\otimes_\sharp(-2,\leq 0)=\: \geq 0$. Then, 
since $\tuple{\otimes^\diamond,\otimes_\sharp}$ is sound, by construction, for $\bT_{\text{PPGC}}(\cC)$, we have that
$\tuple{\otimes,\otimes_\sharp}$ 
is sound for $\cC$. Furthermore, as expected, it turns out that $\otimes_\sharp$ is backward complete for $\cC$, meaning 
that for any $X,Y\in \wp(\bZ)$, $\vee_B \{x\otimes y~|~ x\in X,y\in Y\} = \otimes_\sharp( \vee_B X, \vee_B Y)$. 
For instance, we have that: 
\begin{multline*}
\vee_B (\otimes^\diamond (\{2,4\},\{-1,0\})) = \vee_B \{0,-2,-4\} = \:\; \leq\! 0\; = \\
\otimes_\sharp (>\! 0,\leq\! 0) = \otimes_\sharp(\vee_B \{2,4\},\vee_B \{-1,0\}).
\end{multline*}

\section{Conclusion}
This paper showed that constructive Galois connections, proposed by Darais and Van Horn~\cite{dv16}
as a way to define domains to be used in a mechanized and calculational style of abstract interpretation, 
are isomorphic to an already known class of Galois connections
which formalize partitions of an unordered set as an abstract domain. Building on that,  
we defined a novel  
class of constructive abstract domains for a mechanized approach to abstract interpretation, called purely constructive
Galois connections. We showed that this class of abstract domains
permits to represent a set partition with more flexibility while preserving a constructive approach to Galois connections.

\newpage
\appendix
\section{Proofs}\label{app-sec}

\rCGCprop*
\begin{proof}
Firstly, let us observe that for any $a\in A$, $a\in \mu(\eta(a))$. \\
(1) If $\mu(\eta(a_1))= \mu(\eta(a_2))$ then $a_1\in \mu(\eta(a_1))=\mu(\eta(a_2))$ so that, by (CGC-Corr), 
$\eta(a_1)=\eta(a_2)$. Next, we show that 
$\mu(\eta(a_1))=\mu(\eta(a_2)) \:\Lra\: \exists a\in A. a\in \mu(\eta(a_1))\cap \mu(\eta(a_2))$: 
on the one hand, since $a_1\in \mu(\eta(a_1))$ then $a_1\in \mu(\eta(a_2))$; on the other hand, if 
$a\in \mu(\eta(a_1))\cap \mu(\eta(a_2))$, then, by (CGC-Corr), $\eta(a_1)=\eta(a)=\eta(a_2)$, so that $\mu(\eta(a_1))=\mu(\eta(a_2))$.
\\
(2) Let us check that $\mu(b) \neq \varnothing \:\Lra\: b\in \eta(A)$: 
there exists $a\in \mu(b)$ iff, by (CGC-Corr), there exists $a\in A$ such that 
$\eta(a)=\eta(b)$ iff $b\in \eta(A)$.
\end{proof}

\rCGPprop*
\begin{proof}
(1) If $\mu(\eta(a_1))=\mu(\eta(a_2))$ then $a_1\in \mu(\eta(a_2))$ and  $a_2\in \mu(\eta(a_1))$, 
so that, by (CGP-Corr), 
$\eta(a_1)\leq_B \eta(a_2)\leq \eta(a_1)$. \\
(2) If $a\in \mu(b)$ then, by (CGP-Corr), $\eta(a)\leq_B b$, so that $\eta(a) \in \: \downarrow\!\! b \cap \eta(A)$. 
Conversely, if $b' \in \:\downarrow\!\! b \cap \eta(A)$ then $b'=\eta(a') \leq_B b$, for some $a'\in A$, so
that,  by (CGP-Corr), $a'\in \mu(b)$. 
\\
(3) Let us check that for all $X\in\wpd(A)$ and $b\in B$,
$\eta^\vee(X) \leq_B b \Lra X\subseteq \mu(b)$: $\eta^\vee(X) \leq_B b$ iff $\vee_{a\in X} \eta(a) \leq_B b$ iff
$\forall a\in X.\: \eta(a) \leq_B b$ iff $\forall a\in X.\: a \in \mu(b)$ iff $X\subseteq \mu(b)$. \\
(4) Since, by (3), $\tuple{\eta^\vee,\wpd(A),B,\mu}$ is a GC, we have that 
$\mu= \mu\circ \eta^\vee \circ \mu$, so that $\mu(B) = \mu(\etav(\wpd(A)))$ follows. 
\end{proof}

\rCGCproptwo*
\begin{proof}
$(\Ra)$ For any $a\in A$, we have that $\mu_1(\eta_1(a)) \in \mu_2(B_2)$, so that $\mu_1(\eta_1(a)) = 
\mu_2(b_2)$ for some $b_2\in B_2$. Then, since $a\in \mu_2(b_2)$, by Lemma~\ref{CGCprop}~(2), 
$b_2\in \eta_2(A)$, so that there exists some $x^2_a\in A$ such that $b_2 = \eta_2(x^2_a)$ 
and, in turn, $\mu_1(\eta_1(a)) = \mu_2(\eta_2(x^2_a))$. 
 We define 
$f_{1,2}(\eta_1(a)) \ud \eta_2(x^2_a)$. 
Dually, for any $a\in A$ there exists $x^1_a\in A$ such that $\mu_2(\eta_2(a)) = \mu_1(\eta_1(x^1_a))$,
so that we define 
$f_{2,1}(\eta_2(a)) \ud \eta_1(x^1_a)$. 
Thus, we have that for any $a\in A$, 
$\mu_1(\eta_1(a))= \mu_2(\eta_2(x^2_a)) = \mu_1(\eta_1(x^1_{x^2_a}))$, so that, 
by Lemma~\ref{CGCprop}~(1), $\eta_1(a)=\eta_1(x^1_{x^2_a})$. Dually, 
$\eta_2(a)=\eta_1(x^2_{x^1_a})$.
Hence, it turns out that 
$f_{1,2}(f_{2,1}(\eta_2(a))) = f_{1,2}(\eta_1(x^1_a)) = \eta_2(x^2_{x^1_a}) =\eta_2(a)$ and
$f_{2,1}(f_{1,2}(\eta_1(a))) = f_{2,1}(\eta_2(x^2_a)) = \eta_1(x^1_{x^2_a}) =\eta_1(a)$. 
Moreover, 
$\mu_1(\eta_1(a))= \mu_2(\eta_2(x^2_a))=\mu_2(f_{1,2}(\eta_1(a)))$ and 
$\mu_2(\eta_2(a))= \mu_1(\eta_1(x^2_a))=\mu_1(f_{2,1}(\eta_2(a)))$.
\\
$(\La)$ If $x\in \mu_1(B_1)$ then $x\in \mu_1(b_1)$, for some $b_1\in B_1$, so that, 
by Lemma~\ref{CGCprop}~(2), $b_1=\eta_1(a)$ and $x\in \mu_1(\eta_1(a))$ for some $a\in A$. Thus, 
$x\in \mu_1(\eta_1(a))=\mu_2(f_{1,2}(\eta_1(a)))$, namely, $x\in \mu_2(B_2)$. Dually, $\mu_2(B_2)\subseteq
\mu_1(B_1)$.
\end{proof}

\rteoPGC*
\begin{proof}
(1) In order to have a GC, it must be that for any $X\in \wp(A)$, $Y\in \wp(B)$, 
$\eta^\diamond(X) \subseteq Y  \Lra X\subseteq  \mu^*(Y)$. 
From left to right: if $a\in X$ then $\eta(a) \in Y$, so that, since $a\in \mu(\eta(a))$, we have that $a\in
\mu^*(Y)$. From right to left: 
if $b\in \eta^\diamond(X)$ then $b=\eta(a)$ for some $a\in X$, so that, since $a\in \mu^*(Y)$,  
we have that $a\in \mu(b')$ for some $b'\in Y$, therefore $b=\eta(a)=b'$, so that $b\in Y$. 
Let us notice that $\mu^*$ is clearly disjunctive: $\mu^*(\cup_i Y_i) = \cup_i \mu^*(Y_i)$.
Finally, $\{\mu^*(\eta^\diamond(\{a\}))~|~a\in A\} = \{\mu^*(\{\eta(a)\})~|~a\in A\} =
\{\mu(\eta(a))~|~a\in A\}$, so that, by Lemma~\ref{CGCprop}~(3-4), 
and since $\cup_{a\in A} \mu(\eta(a))=A$ it turns out that
$\{\mu(\eta(a))~|~a\in A\}$ is a partition of $A$.

\noindent
(2) First, notice that $a'\in \gamma(\alpha(\varnothing)) \Lra \alpha(\{a'\})=\alpha(\varnothing)$ holds.
Then, 
it is enough to check that $a'\in \gamma(\alpha(\{a\}) \Lra \alpha(\{a'\})=\alpha(\{a\})$, for any $a,a'\in A$. 
On the one hand, if $a'\in \gamma(\alpha(\{a\}))$ then, by PCG, $\gamma(\alpha(\{a'\})) = \gamma(\alpha(\{a\}))$,
so that $\alpha(\{a'\})=\alpha(\gamma(\alpha(\{a'\}))) = \alpha(\gamma(\alpha(\{a\})))=\alpha(\{a\})$. 
On the other hand, if $\alpha(\{a'\})=\alpha(\{a\})$ then, by CG, 
$a'\in \gamma(\alpha(\{a'\}))=\gamma(\alpha(\{a\}))$. 

\noindent
(3) If $\cC=\tuple{\eta,A,B,\mu}$ is a CGC then we have that $\cC \cong_\varnothing \bT_{\text{CGC}}(\bT_{\text{PGC}}(\cC))$ 
because $\mu^* (\{\eta^\diamond(\{a\})~|~ a\in A\}) = \{\mu^*(\{\eta(a)\})~|~a\in A\} = \{\mu(\eta(a)) ~|~a\in A\}$, 
so that, by Lemma~\ref{CGCprop}~(2), 
$\mu^* (\{\eta^\diamond(\{a\})~|~ a\in A\})\cup \{\varnothing\} =\mu(B)\cup \{\varnothing\}$. 
On the other hand, 
if $\cC=\tuple{\alpha,\wp(A)_\subseteq,D_\leq,\gamma}$ is a PGC then $\cC \cong 
\bT_{\text{PGC}}(\bT_{\text{CGC}}(\cC))$ because 
\begin{align*}
\gamma^* ( (\sing{\alpha})^\diamond (\wp(A))) & = \quad\text{[by definition of $(\sing{\alpha})^\diamond$]}\\
\gamma^* \Big( \big\{  \{ \alpha(\{a\}) ~|~ a\in X\} ~|~ X\in \wp(A)\big\}\Big) & = \quad\text{[by definition of $\gamma^*$]}\\ 
\{ \cup_{a\in X} \gamma(\alpha(\{a\}))~|~X\in \wp(A)\} & = \quad\text{[as $\gamma$ and $\alpha$ are additive]}\\ 
\{\gamma(\alpha(X))~|~X\in \wp(A)\}& =\\ 
\gamma(\alpha(\wp(A)))&
\end{align*} 
\end{proof}

\rcoroPGC*
\begin{proof}
By definition, we have that $\cC_1 \sqsubseteq \cC_2$ iff $\mu_2(B_2) \subseteq \mu_1(B_1)$,
while $\bT_{\text{PGC}}(\cC_1) \sqsubseteq \bT_{\text{PGC}}(\cC_2)$ iff $\mu_2^* (\eta_2^\diamond (\wp(A))) 
\subseteq \mu_1^* (\eta_1^\diamond (\wp(A)))$. 
As in the proof of Theorem~\ref{teoPGC}~(3), we have that $\mu_i^* (\eta_i^\diamond (\wp(A))) = \mu_i (\eta_i (A))$. 
By Lemma~\ref{CGCprop}~(2), we have that $\mu_i (B_i) = \mu_i (\eta_i (A)) \cup \{\varnothing\}$. 
Thus, $\mu_2(B_2) \subseteq \mu_1(B_1)$ iff 
$\mu_2 (\eta_2 (A)) \cup \{\varnothing\} \subseteq \mu_1 (\eta_1 (A))\cup \{\varnothing\}$ iff 
$\mu_2 (\eta_2 (A)) \subseteq \mu_1 (\eta_1 (A))$. 
\end{proof}

\rteoCGP*
\begin{proof}
(1) This is Lemma~\ref{CGP-prop}~(3).

\noindent
(2) It turns out that $\bT_{\text{CGP}}(\cG)$ is a CGP: (a) $\lambda a.\alpha(\downarrow\!\!\{a\}):A \ra D$ 
is monotone: clear,
because $a\leq a'$ implies $\downarrow\!\!\{a\} \subseteq \:\downarrow\!\!\{a'\}$, so that, by monotonicity of $\alpha$,
$\alpha(\downarrow\!\!\{a\}) \leq_D \alpha(\downarrow\!\!\{a'\})$; (b) $\gamma: D 
\ra \wpd(A)$ is monotone because $\cG$ is a GC; (c) $a\in \gamma(d) \Lra 
\alpha(\downarrow\!\!\{a\}) \leq_D d$:  in fact, $a\in \gamma(d) \Lra \:\downarrow\! a \subseteq \gamma(d) \Lra 
{\alpha(\downarrow\!\!\{a\})} \leq_D d$. 

\noindent
(3) On the one hand, 
if $\cC=\tuple{\eta,A,B,\mu}$ is a CGP then we have that $\cC = \bT_{\text{CGP}}(\bT_{\text{GC}}(\cC))$ 
because $\bT_{\text{CGP}}(\bT_{\text{GC}}(\cC)) = \tuple{\lambda a. \eta^\vee(\downarrow \!\! a), A,B,\mu}$ and
for any $a\in A$, $\eta^\vee(\downarrow \!\! a) = \vee_{x\leq a} \eta(x) = \eta(a)$, because $\eta$ is monotonic.
On the other hand, 
if $\cG=\tuple{\alpha,\wpd(A)_\subseteq,D_\leq,\gamma}$ is a GC then 
\begin{align*}
\big\{\gamma \big((\lambda a.\alpha(\downarrow\!\! a ))^\vee (Y) \big)~|~ Y\in \wpd(A)\big\} & =\quad \text{[by definition of $(\lambda a.\alpha(\downarrow\!\!\{a\}))^\vee$]}\\ 
\{ \gamma( \vee_{y\in Y} \alpha (\downarrow\!\! y)) ~|~ Y\in \wpd(A) \} & =\quad \text{[by additivity of $\alpha$]}\\ 
\{\gamma( \alpha (\cup_{y\in Y} \downarrow\!\! y )) ~|~ Y\in \wpd(A) \}  & = \quad \text{[because $Y\in \wpd(A)$]}\\
\{\gamma(\alpha(Y))~|~ Y\in \wpd(A)\} &
\end{align*}
so that $\cG \cong 
\bT_{\text{GC}}(\bT_{\text{CGP}}(\cG))$ holds.
\end{proof}

\rteoCCO*
\begin{proof}
(1) By (CGC-Corr), $x\in \mu(\eta(y))$ iff $\eta(x)=\eta(y)$ iff, by  Lemma~\ref{CGCprop}~(1), $\mu(\eta(x))=\mu(\eta(y))$.\\
(2) By definition, $x\in \varphi(a)$ iff $\varphi(x)=\varphi(a)$, and this precisely 
means that $\bT_{\text{CGC}} (\varphi)$ is a CGC. \\
(3) If $\varphi: A \ra \wp(A)$ is a CCO then $\bT_{\text{CCO}}(\bT_{\text{CGC}} (\varphi))=\varphi$. If $\cC=\tuple{\eta,A,B,\mu}$ is a CGC then 
$\cC \cong_{\varnothing} \tuple{\mu\circ \eta, A, \{\mu(\eta(a))~|~a\in A\},\id}
=\bT_{\text{CGC}}(\bT_{\text{CCO}} (\cC))$ clearly holds, by Lemma~\ref{CGCprop}~(2). 
\end{proof}

\rlemmaone*
\begin{proof}
By soundness in GCs, $g^\diamond (\gamma(\alpha(\{a\}))) \subseteq 
\gamma(g_\sharp(\alpha(\gamma(\alpha(\{a\})))))$ $= \gamma(g_\sharp(\alpha(\{a\})))$. 
Also, since $a\in \gamma(\alpha(\{a\}))$, we have that $g(a)\in g^\diamond (\gamma(\alpha(\{a\})))$. 
Since $g_\sharp$ is block preserving,
$g(a)\in g^\diamond (\gamma(\alpha(\{a\})))\subseteq \gamma(g_\sharp(\alpha(\{a\}))) = \gamma(\alpha(\{a'\}))$, for some
$a'\in A$. Hence, since $g(a)\in \gamma(\alpha(\{a'\}))$ and $\gamma(\alpha(\{a'\}))$ is a block, we have that 
$\gamma(\alpha(\{g(a)\})) = \gamma(\alpha(\{a'\}))$, and, in turn,  $\gamma(\alpha(\{g(a)\}))= \gamma(g_\sharp(\alpha(\{a\})))$,
so that  $\alpha(\gamma(\alpha(\{g(a)\})))= \alpha(\gamma(g_\sharp(\alpha(\{a\}))))$. 
By GI, $\alpha\circ \gamma = \id$, consequently  $g_\sharp(\alpha(\{a\})) = \alpha(\{g(a)\})$. Also, 
$g^\diamond (\gamma(\alpha(\{a\}))) \subseteq \gamma(g_\sharp(\alpha(\{a\}))) = \gamma(\alpha(\{g(a)\}))$.
\end{proof}

\rtheosound*
\begin{proof}
(1) We have that 
$\tuple{f,f_\sharp}$ is sound iff $\forall a\in A.\: f(a) \in \mu(f_\sharp(\eta(a)))$. 
Also, we have that $\bT_{\text{PGC}}(\tuple{f,f_\sharp})$ is sound w.r.t.\ $\bT_{\text{PGC}}(\cC)$ iff 
$\forall X\in \wp(A).\: f^\diamond (X) \subseteq \mu^* (f_\sharp^\diamond (\eta^\diamond (X)))$ 
iff $\{f(x)~|~x\in X\} \subseteq \cup_{y\in X} \mu(f_\sharp (\eta(y)))$ iff
$\forall X\in \wp(A).\forall x\in X.\exists y\in X. f(x)\in \mu(f_\sharp(\eta(y)))$. 
Then, $\forall a\in A.\: f(a) \in \mu(f_\sharp(\eta(a)))$ iff
$\forall X\in \wp(A).\forall x\in X.\exists y\in X. f(x)\in \mu(f_\sharp(\eta(y)))$ clearly holds.  \\
(2) $\tuple{g^\diamond,g_\sharp}$ is sound iff $\forall X\in \wp(A).\: g^\diamond(X) \subseteq  \gamma(g_\sharp(\alpha(X)))$
iff $\forall X\in \wp(A).\:\{g(x)~|~x\in X\} \subseteq \gamma(g_\sharp(\alpha(X)))$. 
Moreover, $\bT_{\text{CGC}}(\tuple{g^\diamond,g_\sharp})=\tuple{g,g_\sharp^r}$ is sound 
for $\tuple{\sing{\alpha},A,\{\alpha(\{a\})~|~a\in A\},\gamma}$
iff
$\forall a\in A.\: g(a) \in \gamma(g_\sharp^r (\sing{\alpha} (a))) = \gamma(g_\sharp^r (\alpha(\{a\}))) =  
\gamma(g_\sharp (\alpha(\{a\})))$, where the latter equality follows by Lemma~\ref{lemma-one}.
Then, if $\tuple{g^\diamond,g_\sharp}_\cD$ is sound then 
$\bT_{\text{CGC}}(\tuple{g^\diamond,g_\sharp})$ is clearly sound. On the other hand, if $\bT_{\text{CGC}}(\tuple{g^\diamond,g_\sharp})$
is sound and $X\in \wp(A)$ then $\{g(x)~|~x\in X\} \subseteq \cup_{x\in X} \gamma(g_\sharp (\alpha(\{x\})))
\subseteq \gamma(g_\sharp (\alpha(X)))$, where the latter containment follows 
by monotonicity of $\gamma$, $g_\sharp$ and $\alpha$. 
\\
(3) Let $\tuple{f,f^\sharp}$ be a sound pair for a CGC $\cC = \tuple{\eta,A,B,\mu}$. By (1) we know that 
$\tuple{f^\diamond,f_\sharp^\diamond} = \bT_{\text{PGC}}(\tuple{f,f_\sharp})$ is sound for
$\bT_{\text{PGC}}(\cC)=\tuple{\eta^\diamond,\wp(A)_\subseteq,\wp(B)_\subseteq,\mu^*}$.
Moreover, $f_\sharp^\diamond: \wp(B)\ra \wp(B)$ is monotone and block-preserving, meaning that 
$f_\sharp^\diamond (\eta^\diamond(\{a\})) %
=\eta^\diamond(\{a'\})$ for some $a'\in A$, 
because, by soundness, $f(a) \in \mu(f_\sharp(\eta(a)))$, so that $\eta(f(a)) = f_\sharp(\eta(a))$,
thus implying that $f_\sharp^\diamond (\eta^\diamond(\{a\})) = \{f_\sharp(\eta(a))\}=\{\eta(f(a))\}=\eta^\diamond(\{f(a)\})$. 
Let us consider $\tuple{f,(f_\sharp^\diamond)^r} = \bT_{\text{CGC}}(\bT_{\text{PGC}}(\tuple{f,f_\sharp}))$,
where $\bT_{\text{CGC}}(\bT_{\text{PGC}}(\cC)) = \tuple{\sing{(\eta^{\diamond})},A, \{\eta^\diamond(\{a\})~|~a\in A\},\mu^*}$.
We have that
$(f_\sharp^\diamond)^r: \{\eta^\diamond(\{a\})~|~a\in A\} \ra \{\eta^\diamond(\{a\})~|~a\in A\}$ is such that 
$(f_\sharp^\diamond)^r (\eta^\diamond(\{a\})) = 
\eta^\diamond (f (a))=f_\sharp^\diamond (\eta^\diamond(\{a\}))$, so we have that 
$\mu^* ((f_\sharp^\diamond)^r (\sing{(\eta^{\diamond})} (a))) = 
\mu^* ((f_\sharp^\diamond)^r (\eta^{\diamond}(\{a\}))) =
\mu^* ( f_\sharp^\diamond (\eta^\diamond(\{a\})))=
\mu(f_\sharp(\eta(a)))$. 
Hence, this shows that $\bT_{\text{CGC}}(\bT_{\text{PGC}}(\tuple{f,f_\sharp}))\cong \tuple{f,f_\sharp}$.

\noindent
On the other hand, let $\tuple{g^\diamond,g^\sharp}$ be a sound pair for a PGC $\cD = \tuple{\alpha,\wp(A)_\subseteq,D_\leq,\gamma}$, 
where $g:A\ra A$ and $g_\sharp$ is monotone and block-preserving. 
By (2),  
$\tuple{g,g_\sharp^r} = \bT_{\text{CGC}}(\tuple{g^\diamond,g_\sharp})$ is sound for the CGC 
$\bT_{\text{CGC}}(\cD)=\tuple{\sing{\alpha},A,\{\alpha(\{a\})~|~a\in A\},\gamma}$.
We therefore consider $\tuple{g^\diamond,(g_\sharp^r)^\diamond} = \bT_{\text{PGC}}(\bT_{\text{CGC}}(\tuple{g,g_\sharp}))$ which,
by (1), is sound for $\bT_{\text{PGC}}(\bT_{\text{CGC}}(\cD)) =
\tuple{(\sing{\alpha})^\diamond,\wp(A)_\subseteq, \wp(\{\alpha(\{a\})~|~a\in A\}),\gamma^*}$. 
We have to check that $\gamma \circ g_\sharp \circ \alpha = \gamma^* \circ  (g_\sharp^r)^\diamond \circ (\sing{\alpha})^\diamond$. For any $X\in \wp(A)$, we have that 
$(g_\sharp^r)^\diamond  ((\sing{\alpha})^\diamond (X)) = 
(g_\sharp^r)^\diamond (\{\alpha(\{a\})~|~a\in X\}) = \{g_\sharp(\alpha(\{a\}))~|~a\in X\}$, so that
$\gamma^* ((g_\sharp^r)^\diamond  ((\sing{\alpha})^\diamond (X))) = \cup_{a\in X} \gamma(g_\sharp(\alpha(\{a\})))$. 
Since $g_\sharp$ and $\gamma$ are additive, while $\alpha$ is always additive, 
we have that $\cup_{a\in X} \gamma(g_\sharp(\alpha(\{a\}))) = \gamma(g_\sharp(\alpha(X)))$. 
\end{proof}

\rlemsound*
\begin{proof}
Let us furst recall that $\bT_{\text{PGC}}(\cC)= 
\tuple{\eta^\diamond,\wp(A)_\subseteq,\wp(B)_\subseteq,\mu^*}$.\\
(1)~$\tuple{f^\diamond,f_\sharp^\diamond}$ is BCA w.r.t.\  $\bT_{\text{PGC}}(\cC)$ iff $f_\sharp^\diamond = \eta^\diamond \circ f^\diamond \circ 
\mu^*$ iff $\forall Y\in \wp(B). \{f_\sharp(y)~|~y\in Y\} = \eta^\diamond(f^\diamond(\cup_{y\in Y} \mu(y))) = 
\{\eta(f(x))~|~x\in \mu(y), y\in Y\}$  iff $\tuple{f,f_\sharp}$ satisfies {\rm (CGC-Cmp/$\eta\mu$)}. 
\\
(2)~$\tuple{f^\diamond,f_\sharp^\diamond}$ is forward complete w.r.t.\  $\bT_{\text{PGC}}(\cC)$ 
iff $f^\diamond \circ \mu^* = \mu^* \circ f_\sharp^\diamond$ iff $\forall Y\in \wp(B). \{f (x)~|~x\in \mu(y), y\in Y\} = 
\cup \{\mu(f_\sharp(y))~|~y\in Y\}$ iff $\tuple{f,f_\sharp}$ satisfies {\rm (CGC-Cmp/$\mu\mu$)}. \\
(3)~$\tuple{f^\diamond,f_\sharp^\diamond}$ is backward complete w.r.t.\  $\bT_{\text{PGC}}(\cC)$ 
iff $\eta^\diamond\circ f^\diamond  = f_\sharp^\diamond\circ \eta^\diamond$ iff $\forall X\in \wp(A). 
\{\eta(f (x))~|~x\in X\} = \{f_\sharp(\eta(x))~|~x\in X\}$ iff $\tuple{f,f_\sharp}$ satisfies {\rm (CGC-Cmp/$\eta\eta$)}. \\
(4)~$\tuple{f^\diamond,f_\sharp^\diamond}$ is precise w.r.t.\ $\bT_{\text{PGC}}(\cC)$ 
iff $f^\diamond  = \mu^* \circ f_\sharp^\diamond\circ \eta^\diamond$ iff $\forall X\in \wp(A). 
\{f (x)~|~x\in X\} = \cup \{\mu(f_\sharp(\eta(x)))~|~x\in X\}$ iff $\tuple{f,f_\sharp}$ satisfies {\rm (CGC-Cmp/$\mu\eta$)}. 
\end{proof}

\rPCGCprop*
\begin{proof}
(1) $a\in \mu(\wedge_i b_i)$ iff, by PCGC~(2), $\eta(a) \leq \wedge_i b_i$ iff
$\forall i.\, \eta(a) \leq b_i$ iff, by PCGC~(2), $\forall i.\, a\in \mu(b_i)$ iff $a\in \cap_i \mu(b_i)$.
\\
(2) The proof of $\eta(a_1)= \eta(a_2) \:\Lra\: \mu(\eta(a_1))=\mu(\eta(a_2))$, by PCGC~(1), is the same of Lemma~\ref{CGP-prop}~(1). 
Also, $\mu(\eta(a_1))=\mu(\eta(a_2)) \: \Ra\: a_1,a_2\in \mu(\eta(a_1))\cap \mu(\eta(a_2))$. Conversely,
if $a\in \mu(\eta(a_1))\cap \mu(\eta(a_2))$ then, by PCGC~(1), $\eta(a_1)=\eta(a)=\eta(a_2)$. \\
(3) If $b\in \eta(A)$ then $b=\eta(a)$, for some $a$, so that $a\in \mu(\eta(a))$. A counterexample to the vice versa
is in Example~\ref{ex-PCGC}.
\\
(4) $\vee_{a\in X} \eta(a)\leq b \:\Lra\: \forall a\in X.\eta(a)\leq b \:\Lra\: \forall a\in X.
a\in \mu(b) \:\Lra\: X\subseteq \mu(b)$.
\end{proof}

\rteoPCGC*
\begin{proof}
(1) By Lemma~\ref{PCGC-prop}~(3), $\bT_{\text{PPGC}}(\cC) = \tuple{\eta^\vee,\wp(A)_\subseteq, B_\leq,\mu}$ is a GC. If
$a\in \mu(\eta^\vee(\{a_1\}) \cap \mu(\eta^\vee(\{a_2\}))$ then, by PCGC~(1), $\eta(a_1)=\eta(a)=\eta(a_2)$, 
so that $\mu(\eta^\vee(\{a_1\})) = \mu(\eta^\vee(\{a_2\}))$, implying that $\bT_{\text{PPGC}}(\cC)$ is indeed a PPGC.  
\\
(2)  Let us show PCGC~(1): 
\begin{align*}
a\in \gamma(\sing{\alpha}(a')) & \Ra \quad\text{[by definition]}\\
a\in \gamma(\alpha(\{a'\})) & \Ra \quad\text{[by GC]}\\
\alpha(\{a\}) \leq \alpha(\{a'\}) & \Ra \quad\text{[by monotonicity of $\gamma$]}\\
\gamma(\alpha(\{a\}))\subseteq \gamma(\alpha(\{a'\})) & \Ra \quad\text{[by PPGC]}\\
\gamma(\alpha(\{a\}))= \gamma(\alpha(\{a'\})) & \Ra \quad\text{[by GC]}\\
\alpha(\{a\})= \alpha(\{a'\}) & \Ra \quad\text{[by definition]}\\
\alphasing(a)= \alphasing(a') &
\end{align*}
On the other hand, $\alphasing(a)= \alphasing(a')$ means $\alpha(\{a\})= \alpha(\{a'\})$, so that, by GC,
$a\in \gamma(\alpha(\{a\}))= \gamma(\alpha(\{a'\})) = \gamma(\sing{\alpha}(a'))$. 
PCGC~(2) also follows because $a\in \gamma(d)$ iff $\{a\} \subseteq \gamma(d)$ iff, by GC,  
$\alpha(\{a\}) \leq d$ iff $\alphasing(a) \leq d$. 
\\
(3) Let $\cC = \tuple{\eta,A,B_\leq,\mu}$ be a PCGC. We have that  
$\bT_{\text{PCGC}}(\bT_{\text{PPGC}}(\cC)) = \tuple{\sing{(\eta^\vee)},A,B_\leq,\mu}$, so that 
$\bT_{\text{PCGC}}(\bT_{\text{PPGC}}(\cC)) = \cC$ holds because $\sing{(\eta^\vee)}=\eta$.
On the other hand, let $\cD=\tuple{\alpha,\wp(A)_\subseteq,D_\leq,\gamma}$ be a PPGC. Here, 
$\bT_{\text{PPGC}}(\bT_{\text{PCGC}}(\cD)) = \tuple{(\alphasing)^\vee,\wp(A)_\subseteq,D_\leq,\gamma}$. 
We have that 
\begin{align*}
\gamma((\alphasing)^\vee (\wp(A))) &=\\ 
\{\gamma((\alphasing)^\vee (X))~|~X\in \wp(A)\} &=\quad\text{[by definition]}\\ 
\{\gamma(\vee_{a\in X} \alphasing(a)) ~|~X\in \wp(A)\} & =\quad\text{[by definition]}\\
\{\gamma(\vee_{a\in X} \alpha(\{a\})) ~|~X\in \wp(A)\} & =\quad\text{[since $\alpha$ is additive]}\\
\{\gamma(\alpha(X)) ~|~X\in \wp(A)\} & =\\
\gamma(\alpha(\wp(A)) &
 \end{align*}
so that $\bT_{\text{PPGC}}(\bT_{\text{PCGC}}(\cD)) \cong \cD$. 
\end{proof}

\rteoPCGCsound*
\begin{proof}
(1) We have that $\tuple{f,f_\sharp}$ is sound w.r.t.\ $\cC$ iff $\eta^\vee \circ  f^\diamond \circ 
\mu \sqsubseteq f_\sharp$ iff $\tuple{f^\diamond,f_\sharp}$ is sound w.r.t.\ $\bT_{\text{PPGC}}(\cC)=
\tuple{\eta^\vee,\wp(A)_\subseteq, B_\leq,\mu}$.\\
(2) $\tuple{g^\diamond,g_\sharp}$ is sound w.r.t.\ $\cD$ iff $\alpha \circ g^\diamond \circ  \gamma 
\sqsubseteq g_\sharp$, 
while  
$\tuple{g,g_\sharp}=\bT_{\text{PCGC}}(\tuple{g^\diamond,g_\sharp})$ is sound
w.r.t.\ $\bT_{\text{PCGC}}(\cD) = \tuple{\alphasing,A,D_\leq,\gamma}$ iff
$g^\diamond_{\bT_{\text{PCGC}}(\cD)} \sqsubseteq g_\sharp$ iff 
$(\alphasing)^\vee \circ g^\diamond \circ \gamma \sqsubseteq g_\sharp$. 
Then, it is enough to observe that $(\alphasing)^\vee = \alpha$: in fact, 
for any $X\in\wp(A)$, 
$(\alphasing)^\vee (X) = \vee_{x\in X} \alphasing(x) = \vee_{x\in X} \alpha(\{x\}) = \alpha(\cup_{x\in X} \{x\}) = \alpha(X)$. 
\\
(3) Clear.  
\end{proof}


\begin{thebibliography}{99}

\bibitem{blazy13}
S.~Blazy, V.~Laporte, A.~Maroneze, D.~Pichardie. 
\newblock Formal verification of a {C} value analysis based on abstract interpretation. 
\newblock In \emph{Proc. of the International Static Analysis Symposium (SAS'13)}, Springer 
LNCS 7935, pp.~324-344, 2013.

\bibitem{cou99}
P.~Cousot. 
\newblock The calculational design of a generic abstract interpreter. 
\newblock In M.~Broy and R.~Steinbr\"{u}ggen, eds, \emph{Calculational System Design}, 
NATO ASI Series F. IOS Press, Amsterdam, 1999.


\bibitem{CC77}
P.~Cousot and R.~Cousot.
\newblock Abstract interpretation: a unified lattice model for static analysis
  of programs by construction or approximation of fixpoints.
\newblock In \emph{Proc.\ 4th ACM POPL}, pp.~238-252, 1977.

\bibitem{CC79}
P.~Cousot and R.~Cousot.
\newblock Systematic design of program analysis frameworks.
\newblock In \emph{Proc.\ 6th ACM POPL}, pp.~269-282, 1979.

\bibitem{cc92}
P.~Cousot and R.~Cousot. 
\newblock Abstract interpretation frameworks.  
\newblock \emph{J.\ Logic and Computation},  2(4):511-547, 1992.  


\bibitem{CC94}
P.~Cousot and R.~Cousot.
\newblock Higher-order abstract interpretation (and application to comportment
  analysis generalizing strictness, termination, projection and {P}{E}{R}
  analysis of functional languages) ({I}nvited {P}aper).
\newblock In {\em Proc.\ of the IEEE Int.\ Conf.\ on Computer Languages 
(ICCL'94)}, pp.~95-112. IEEE Computer Society Press, 1994.

\bibitem{ch78}
P.~Cousot and N.~Halbwachs.
\newblock Automatic discovery of linear restraints among variables
of a program. 
\newblock In \emph{Proc.\ 5th ACM POPL}, pp.~84-97, 1978.


\bibitem{dv16}
D.~Darais and D.~Van Horn. 
\newblock Constructive Galois connections: taming the Galois connection framework for mechanized metatheory. 
\newblock In \emph{Proceedings of the 21st ACM Intern.\ Conf.\ on Functional Programming (ICFP'16)}. 
ACM, pp.~311-324, 2016.  


\bibitem{gr98}
R.~Giacobazzi and F.~Ranzato.
\newblock Optimal domains for disjunctive abstract interpretation. 
\newblock \emph{Sci.\ Comp.\ Program.}, 32:177-210, 1998.

\bibitem{gr14}
R. Giacobazzi and F. Ranzato. 
\newblock Correctness kernels of abstract interpretations. 
\newblock \emph{Information and Computation}, 237:187-203, 2014.

\bibitem{grs00}
R.~Giacobazzi, F.~Ranzato and F.~Scozzari.
\newblock Making abstract interpretations complete.
\newblock \emph{J.~ACM}, 47(2):361-416, 2000.

\bibitem{jou}
J.H.~Jourdan. 
\newblock \emph{Verasco: a Formally Verified {C} Static Analyzer}. 
\newblock PhD thesis, Universit\'e Paris Diderot (Paris 7), France, 2016.

\bibitem{verasco}
J.H.~Jourdan, V.~Laporte, S.~Blazy, X.~Leroy, D.~Pichardie. 
\newblock A formally-verified {C} static analyzer. 
\newblock In \emph{Proc.\ 42nd ACM POPL}, 
pp.~247-259, 2015.

\bibitem{mj08}
J.~Midtgaard and T.~Jensen. 
\newblock A calculational approach to control-flow analysis by abstract interpretation. 
\newblock In \emph{Proc.\ Intern.\ Static Analysis Symposium (SAS'08)}, LNCS 5079, pp.~347-362, 2008.

\bibitem{mon98}
D.~Monniaux. 
\newblock \emph{R\'{e}alisation m\'{e}canis\'{e}e d'interpr\'{e}teurs abstraits}. 
\newblock Rapport de DEA, Universit\'{e} Paris VII, France, 1998. In French.

\bibitem{nnh}
F.~Nielson, H.R.~Nielson, C.~Hankin. 
\newblock {\em Principles of Program Analysis}.
\newblock Springer-Verlag, 1999. 

\bibitem{pic05}
D.~Pichardie. 
\newblock \emph{Interpr\'{e}tation abstraite en logique intuitionniste: extraction d'analyseurs Java certifi\'{e}s}. 
PhD thesis, Universit\'{e} de Rennes, France, 2005. In French. 

\bibitem{rt05}
F.~Ranzato and F.~Tapparo.
\newblock An abstract interpretation-based refinement algorithm for
strong preservation. 
\newblock In \emph{Proc.\ 11th Intern.\ Conf.\ on Tools and Algorithms
for the Construction and Analysis of Systems (TACAS'05)}, LNCS~3440, pp.~140--156,
Springer, 2005.


\bibitem{rt07}
F.~Ranzato and F.~Tapparo.
\newblock Generalized strong preservation by abstract interpretation.
\newblock \emph{J.\ Logic and Computation}, 17(1):157-197, 2007. 

\bibitem{so08}
P.F.~Silva and J.N.~Oliveira. 
\newblock `Galculator': Functional prototype of a Galois-connection based proof assistant. 
\newblock In \emph{Proc.\ 10th International ACM Conference on Principles and Practice of Declarative Programming  (PPDP'08)}. 
ACM, 2008.

\end{thebibliography}
\end{document}